\renewcommand{\baselinestretch}{1.5}
\definecolor{winered}{rgb}{0.5,0,0}
\numberwithin{equation}{section}
\newtheorem{theorem}{Theorem}[section]
\newtheorem{proposition}[theorem]{Proposition}
\theoremstyle{definition}
\newtheorem{Examples}{\color{myRed}Examples}
\newtheorem{assumption}[Examples]{Assumption}
\definecolor{my}{rgb}{0.05,0.05,0.5}
\definecolor{myBlue}{rgb}{.1,.1,.5}
\definecolor{myGreen}{rgb}{0,.4,0}
\definecolor{myRed}{rgb}{.25,0.15,.5}
\definecolor{my}{rgb}{0.05,0.05,0.5}
\newcommand{\cond}{\displaystyle \stackrel{d}{\longrightarrow}}
\newcommand{\conas}{\stackrel{a.s.}{\longrightarrow}}
\newcommand{\conp}{\stackrel{p}{\longrightarrow}}
\newcommand{\conpp}{\stackrel{p^{\pi}}{\longrightarrow}}
\renewcommand{\limsup}{\displaystyle \operatornamewithlimits{\lim\sup \ }}
\renewcommand{\liminf}{\displaystyle \operatornamewithlimits{\lim\inf \ }}
\renewcommand{\mathbf}[1]{\textbf{\textit{#1}}}
\newcommand{\IG}{\mathit{\Gamma}}
\newcommand{\E}{\operatorname{E}}
\newcommand{\V}{\operatorname{Var}}
\newcommand{\Rmnum}[1]{\expandafter\@slowromancap\romannumeral #1@}
\newcommand{\blind}{1}
\begin{document}

	\def\spacingset#1{\renewcommand{\baselinestretch}%
		{1.7}\small\normalsize} \spacingset{0.5}
	
	
	\if1\blind
	{\title{{\vspace{-0.6in}\bf Robust Permutation Tests in Linear Instrumental Variables Regression}\thanks{
				I thank the editor, two anonymous referees, Dmitry Arkhangelsky, Xavier D'Haultf\oe{}uille, Antoine Djogbenou, Hiroyuki Kasahara, Uros Petronijevic, and the participants of the Econometric Society European Winter Meeting 2019 and the 37th Canadian Econometrics Study Group (CESG) Meeting 2021 for helpful comments. I gratefully acknowledge financial support from the LA\&PS Minor Research Grant, York University.  
				All errors are my own.}
		}
		\author{
			Purevdorj Tuvaandorj\footnote{\texttt{tpujee@yorku.ca}}\\York University\\
		}
		\maketitle
	} \fi
	
	\if0\blind
	{
			\title{\vspace{-0.6in}\bf Robust Permutation Tests in Linear Instrumental Variables Regression}
                    \maketitle
	} \fi
	\begin{abstract}
		This paper develops permutation versions of identification-robust tests in linear instrumental variables regression. Unlike the existing randomization and rank-based tests in which independence between the instruments and the error terms is assumed, the permutation Anderson-Rubin (AR), Lagrange Multiplier (LM) and Conditional Likelihood Ratio (CLR) tests are asymptotically similar   
		and robust to conditional heteroskedasticity under standard exclusion restriction i.e. the orthogonality between the instruments and the error terms. 
		Moreover, when the instruments are independent of the structural error term, the permutation AR tests are exact, hence robust to heavy tails. As such, these tests share the strengths of the rank-based tests and the wild bootstrap AR tests.   
		Numerical illustrations corroborate the theoretical results.\par
		\bigskip
		\noindent%
		{\it Keywords:} Anderson-Rubin statistic, Asymptotic and exact tests, Conditional likelihood ratio statistic,  Heteroskedasticity, Identification, Lagrange Multiplier statistic
		\vfill 
	\end{abstract}

	\newpage
	\spacingset{1.9} 
	\section{Introduction}
	\label{sec:intro}
	The use of instrumental variables (IVs) is widespread across many disciplines. One of the
	main inferential issues that require taking careful account in the IV regression is to protect against a possible
	weak correlation between instruments and endogenous regressors e.g. treatment variable. To this end, it is desirable to use \emph{identification-robust} tests that are \emph{similar}, at least asymptotically, when the instruments have an arbitrary degree of explanatory power (identification strength) for the endogenous regressor, and have good power when the instruments are informative.\footnote{For a comprehensive treatment of similar tests, see \citet[Chapter 4]{Lehmann-Romano(2005)} and \cite{Linnik(2008)}.\smallskip}\par   
	This paper proposes permutation (randomization) test versions of heteroskedasticity-robust \cite{Anderson-Rubin(1949)}'s AR, \cite{Kleibergen(2002)}'s LM and \cite{Andrews-Guggenberger(2019)}'s CLR tests in IV regression within a super-population framework.\footnote{The LM test is a score test that uses an outer-product-of-the-gradient information matrix estimator, and the CLR test is a heteroskedasticity-robust version of the CLR test proposed by \cite{Moreira(2003)} which is a LR test with  \emph{Neyman structure}, see \citet[Chapter 4.3]{Lehmann-Romano(2005)} for the latter.}  
    We demonstrate the uniform asymptotic similarity of these tests under standard conditions and characterize the conditions under which the permutation AR tests are exact.
	\par 
	Permutation inference is attractive because of its exactness when relevant assumptions hold. It also bears a natural connection to IV methods since 
	IVs provide an exogenous variation independent of the unobserved confounders for the endogenous regressor in the IV regression and the permutation inference typically seeks to exploit independence of two sets of variables by permuting the elements of one variable holding the other fixed to replicate the distribution of a test statistic at hand.\par  
	
	Despite the link, the literature on randomization inference in IV models is scarce. \cite{Imbens-Rosenbaum(2005)} develop exact permutation tests in an IV setting under a finite population framework, where the instruments (or transformations thereof, such as ranks) are permuted while the quantities that do not depend on the IVs are held fixed. Given the advantages of the permutation method in an IV setting as exemplified by \cite{Imbens-Rosenbaum(2005)}, we broaden its scope by proposing the permutation versions of the trinity of aforementioned identification-robust test statistics in a super-population framework. \par 
The contributions of the paper are as follows. First, we propose two permutation AR (PAR) statistics for the IV setting, denoted as \(\mathrm{PAR}_1\) and \(\mathrm{PAR}_2\). These directly extend the work of \cite{Diciccio-Romano(2017)}, who develop permutation tests for regression coefficients in heteroskedastic linear regression, but do not study the IV regression. Specifically, the \(\mathrm{PAR}_1\) statistic permutes the rows of the instrument matrix affecting the endogenous regressor, while the \(\mathrm{PAR}_2\) statistic permutes the null-restricted residuals in the structural equation. We establish the finite sample validity of the PAR tests under an independence assumption similar to that in \cite{Andrews-Marmer(2008)}. We then show their asymptotic validity under the usual exogeneity condition, allowing for conditional heteroskedasticity. As such, our result differs from the finite-population results of \cite{Imbens-Rosenbaum(2005)}.\par 
Second, we propose a permutation LM (PLM) test. Constructing its permutation version is nontrivial because the LM statistic is a nonlinear score statistic. To achieve this, we permute the residuals from the first-stage estimation of the reduced-form equation and generate a permutation endogenous regressor.\footnote{See \cite{Freedman-Lane(1983)} and \cite{Diciccio-Romano(2017)} for residual permutation tests in linear models.} By combining this regressor with the permuted null-restricted residuals in the structural equation, we construct the PLM statistic.\footnote{This approach is distinct from the score statistic construction proposed by \cite{Hemeriketal2020} based on sign-flipping in generalized linear models.}\par 
Third, we propose a permutation CLR (PCLR) test. This test generates the conditional permutation distribution by permuting the null-restricted residuals of the structural equation while keeping the nuisance parameter estimates fixed. Although the PCLR test shares some similarities with the conditional permutation tests proposed by \cite{Rosenbaum1984} and \cite{Hennessyetal2016} in different contexts, such a permutation test has not been considered in the IV literature.\par 
	Fourth, as a main technical contribution, we show the uniform asymptotic similarity of the proposed permutation tests and derive their asymptotic power under local alternatives and strong identification. To the best of our knowledge, these results have not been shown before in the context of permutation/randomization inference. To establish uniform asymptotic similarity, we adapt the asymptotic results from \cite{Andrews-Guggenberger(2019), Andrews-Guggenberger(2019s)} and \cite{Andrews-Cheng-Guggenberger(2020)} to the permutation inference setting.\par 
	
	Fifth, we provide a rigorous analysis of the PAR tests with varying numbers of IVs, identifying the conditions under which these tests maintain asymptotic validity. To substantiate this, we interpret the PAR statistics as double-indexed permutation statistics and make a novel use of the CLT established by \cite{Pham-Mocks-Sroka(1989)}. Such an application appears to be new in the literature.
	
Finally, we compare the robust permutation tests to both the identification-robust rank-based tests of \cite{Andrews-Marmer(2008)} and \cite{Andrews-Soares(2007)}, and the wild bootstrap tests of \cite{Davidson-MacKinnon(2012)} through simulations. In terms of controlling Type I error, we find that the robust permutation tests outperform the rank-based tests under the standard exclusion restriction, and perform on par with or better than the wild bootstrap tests in the heteroskedastic designs considered.\par 
	{\flushleft{\bf Discussion of the related literature.}} The literature on linear IV model is vast. Earlier surveys on this topic are provided by 
	\cite{Stock-Wright-Yogo(2002)}, \cite{Dufour(2003)}, and \cite{Andrews-Stock(2007)}, and recent contributions include \cite{Moreira-Moreira(2019)}, \cite{Young(2020)} and the survey of \cite{Andrews-Stock-Sun(2019)}.\par 
	\cite{Andrews-Marmer(2008)} develop rank-based AR tests under the independence assumption between 
	the instruments and the structural error term. Under their assumption, the $\mathrm{PAR}_1$ test proposed here is exact while the $\mathrm{PAR}_2$ test is so when there are no included non-constant exogenous variables.    
	\cite{Andrews-Soares(2007)} consider a rank-version of the CLR statistic of \cite{Moreira(2003)}. Their test is robust to heavy-tailed errors but not 
	to heteroskedasticity.\par 
\cite{Rosenbaum1984} and \cite{Hennessyetal2016} propose conditional permutation tests in the causal inference context, which is different from our setting. However, the strategies used for the PCLR test validity may prove useful for showing the asymptotic validity of their tests under weak conditions. Also, the PCLR test is quite distinct from the conditional test based on rerandomization by \cite{LiDingRubin2018}, where the conditioning statistic is a permutation quantity as opposed to the sample quantity considered here. \cite{Hemeriketal2020} propose a score statistic in generalized linear models based on sign-flipping. The residual permutation construction of the PLM test provides an alternative method for constructing a permutation score statistic in their framework.\par 
This paper complements the literature on bootstrap inference in the linear IV model, including works such as \cite{Davidson-MacKinnon(2008)}, \cite{Moreira-Porter-Suarez(2009)}, and \cite{Davidson-MacKinnon(2012)}, which document the improved performance of bootstrap inference over asymptotic inference.\par 
		The key to our results is that the identification-robust PAR, PLM, and PCLR statistics are studentized. The use of studentization to obtain an improved test is not new; a well-known example is the higher-order accuracy of the bootstrap-$t$ confidence interval in the one-sample problem \citep[Chapter 15.5]{Lehmann-Romano(2005)}. Other examples include \cite{So-Shin(1999)}, who develop a persistence-robust test for autoregressive models, and \cite{Neuhaus(1993)}, who propose a permutation test for a two-sample problem with randomly censored data. For further extensions and applications of permutation tests based on studentized statistics, see \cite{Janssen(1997)}, \cite{Neubert-Brunner(2007)}, \cite{Pauly(2011)} and \cite{Chung-Romano(2013), Chung-Romano(2016)}.\par 
	Several recent papers that consider randomization inference in different contexts include 
	\cite{Chernozhukov-Hansen-Jansson(2009)}, 
	\cite{Canay-Romano-Shaikh(2017)}, \cite{Bugni-Canay-Shaikh(2018)}, \cite{Canay-Kamat(2018)}, 
	\cite{Ganong-Jager(2018)} and \cite{Dufour-Flachaire-Khalaf(2019)}.\par
    {\flushleft{\bf Organization of the paper.}}
	The paper is organized as follows. Section \ref{MS} introduces the model and the identification-robust test statistics and develops the permutation tests. Section \ref{sec: Sim} provides simulation results comparing the performance of alternative test procedures. Section \ref{sec: EA} presents an empirical application. We briefly conclude in Section \ref{Conc}. Supplemental Appendix contains
	all the proofs of our theoretical results and additional simulation evidence. We also provide the replication \texttt{R} codes of the empirical applications.
	\section{Robust permutation tests}\label{MS}
	\subsection{Model setup}
	We develop permutation-based tests for a restriction on a scalar structural coefficient 
	\begin{equation}\label{eq: H0}
		H_0:\theta=\theta_0,
	\end{equation}
	in the linear IV regression model:
	\begin{align}
		y
		&=Y\theta+{X}\gamma+u,\label{SE}\\
		Y
		&=W\IG+{X}\mathit{\Psi}+V,\label{RE} 
	\end{align}
	where $y=[y_{1},\dots, y_{n}]^{\prime}$ and 
	$Y=[Y_1,\dots, Y_n]^{\prime}$ are $n$-vectors of endogenous 
	variables, $X=[X_1,\dots, X_n]^{\prime}$ is an $n\times p$ matrix of 
	exogenous covariates whose first column is the $n$-vector of ones $\iota=[1,\dots, 1]^{\prime}$, $W=[W_1,\dots, W_n]^{\prime}$ is an $n\times k\ (k\geq 1)$ 
	matrix of IVs that does not include $\iota$ and the variables in $X$, $u=[u_1, \dots, u_n]^{\prime}$ is an $n$-vector of structural error terms, $V=[V_1,\dots, V_n]^{\prime}$ is an $n$-vector of reduced-form error terms; $\gamma$ is a $p$-parameter vector, and $\IG$ and $\mathit{\Psi}$ are $k$- and $p$-vectors of reduced-form coefficients, respectively.\par 
	Let $Z=[Z_1,\dots, Z_n]^{\prime}\equiv M_{X}W$, where $M_A\equiv I_n-P_A\equiv I_n-A(A^{\prime}A)^{-1}A^{\prime}$ for a matrix $A$ of full column rank. 
For later use, it will be convenient to express the 
	equation \eqref{RE} as 
	\begin{align}\label{IVReg}
		Y
		&=Z\IG+{X}\xi+V, 
	\end{align}
	where $\xi\equiv(X^{\prime}X)^{-1}X^{\prime}W\mathit{\Gamma}+\mathit{\Psi}$. 
We also define $\tilde{u}_i(\theta)\equiv y_i-Y_i^{\prime}\theta-X_i^{\prime}(X^{\prime}X)^{-1}X^{\prime}(y-Y\theta)$, $\tilde{u}(\theta)\equiv [\tilde{u}_1(\theta),\dots, \tilde{u}_n(\theta)]^{\prime}$, $\tilde{Y}=[\tilde{Y}_1,\dots, \tilde{Y}_n]^{\prime}\equiv M_{X} Y$, and 
	\begin{align}
		\hat{m}
		&\equiv n^{-1}Z^{\prime}\tilde{u}(\theta_0),\label{eq: mfun}\\
		\hat{\Sigma}
		&\equiv n^{-1}\sum_{i=1}^nZ_iZ_i^{\prime}\tilde{u}_{i}(\theta_0)^2,\label{eq: Sighat}\\ 
		\hat{C}
		&\equiv n^{-1}\sum_{i=1}^nZ_iZ_i^{\prime}\tilde{Y}_{i}\tilde{u}_i(\theta_0),\label{eq: Chat}\\
		\hat{G}
		&\equiv n^{-1}Z^{\prime}Y,\quad \hat{J}
		\equiv\hat{G}-
		\hat{C}\hat{\Sigma}^{-1}\hat{m}.\label{eq: Jacobs} 
	\end{align} 
The first heteroskedasticity and identification-robust test statistic is the AR statistic defined as 
	 \begin{equation*}
		\text{AR}=\text{AR}(W, X, \tilde{u}(\theta_0))\equiv n\,\hat{m}^{\prime}\hat{\Sigma}^{-1}\hat{m}.  
	\end{equation*}
	The null asymptotic distribution of the AR statistic is chi-square regardless of the values of $\IG$ because $n^{1/2}\hat{m}$ is asymptotically normal under \eqref{eq: H0}. Another asymptotically pivotal test statistic
is the following heteroskedasticity-robust version of \cite{Kleibergen(2002)}'s LM statistic
	\begin{equation}
		\text{LM}=\mathrm{LM}(Z, \tilde{u}(\theta_0), Y)\equiv n\,\hat{m}^{\prime}\hat{\Sigma}^{-1/2}P_{\hat{\Sigma}^{-1/2}\hat{J}}\hat{\Sigma}^{-1/2}\hat{m}. 
	\end{equation}
	As defined, $\hat{J}$ in \eqref{eq: Jacobs} is a vector of residuals from the regression of the sample Jacobian $\hat{G}$ on the sample moment function in \eqref{eq: mfun}. When $\IG$ has full rank i.e. the identifying power of $W$ is strong, $\hat{J}$ converges to a matrix of full rank under mild conditions, and the LM statistic is asymptotically chi-squared under \eqref{eq: H0}. However, even when $\IG$ is ``small'' so that the noise and signal parts of $Y$ are of a similar magnitude or the former dominates the latter,  
	$\hat{J}$ and $\hat{m}$ are asymptotically independent after suitable normalizations, and consequently, the LM statistic is still chi-squared in the limit.\par 
	
	The CLR-type test statistic considered next has an approximate Neyman structure, such that its conditional asymptotic distribution, given a sufficient statistic for the vector of nuisance parameters $\IG$ is independent of $\IG$. Let 
	\begin{align}
		\hat{\mathcal{S}}
		&\equiv\hat{\Sigma}^{-1/2}n^{1/2}\hat{m},
		\label{SSuff}\\
		\hat{\mathcal{T}}
		&\equiv\hat{\Sigma}^{-1/2}n^{1/2}\hat{J}{\sqrt{(\theta_0, 1){\Omega}^{\epsilon}(\hat{\Sigma}, \hat{\mathcal{V}})^{-1}(\theta_0, 1)^{\prime}}},\label{TSuff0}
	\end{align}
	where $\Omega^\epsilon(\cdot, \cdot)$ is an eigenvalue-adjusted version of the $2\times 2$ matrix defined as 
	\begin{align}
		{\Omega}(\hat{\Sigma}, \hat{\mathcal{V}})
		&=	k^{-1}\begin{bmatrix}
		\mathrm{tr}({K}_{11}(\hat{\mathcal{V}})^{\prime}\hat{\Sigma}^{-1})& 	\mathrm{tr}({K}_{12}(\hat{\mathcal{V}})^{\prime}\hat{\Sigma}^{-1})\\
		\mathrm{tr}({K}_{21}(\hat{\mathcal{V}})^{\prime}\hat{\Sigma}^{-1})& 	\mathrm{tr}({K}_{22}(\hat{\mathcal{V}})^{\prime}\hat{\Sigma}^{-1})
		\end{bmatrix},\label{OH}
	\end{align}
	and the $k\times k$ matrix ${K}_{ij}(\hat{\mathcal{V}})$ is the $(i, j)$ submatrix of ${K}(\tilde{\mathcal{V}})$ given by  
	\begin{align}
		{K}(\hat{\mathcal{V}})
		&\equiv (B(\theta_0)^{\prime}\otimes I_{k})\,\hat{\mathcal{V}}\,(B(\theta_0)\otimes I_{k})=\begin{bmatrix}
			{K}_{11}(\hat{\mathcal{V}})& {K}_{12}(\hat{\mathcal{V}})\\
		{K}_{21}(\hat{\mathcal{V}})& {K}_{22}(\hat{\mathcal{V}})
		\end{bmatrix},\quad 
		B(\theta)\equiv
		\begin{bmatrix}
			1& 0\\
			-\theta& -1
		\end{bmatrix}.
		\label{KH} \text{\footnotemark}
	\end{align}
With an appropriate choice of $\hat{\mathcal{V}}$ specified below, the transformation $\hat{\mathcal{T}}$ in \eqref{TSuff0} provides an asymptotically sufficient statistic for $\IG$ that is independent of the statistic $\hat{\mathcal{S}}$ in \eqref{SSuff}. 
\cite{Andrews-Guggenberger(2019), Andrews-Guggenberger(2019s)} propose two different versions of $\hat{\mathcal{V}}$, and for the linear IV model, they recommend the following choice of $\hat{\mathcal{V}}$:
		\begin{align}
		\hat{\mathcal{V}}
		&\equiv n^{-1}\sum_{i=1}^n\left[(\varepsilon_i(\theta_0)-\hat{\varepsilon}_{in}(\theta_0))
		(\varepsilon_i(\theta_0)-\hat{\varepsilon}_{in}(\theta_0))^{\prime}\right]\otimes (Z_iZ_i^{\prime}),\label{hvb}\\
		\varepsilon_i(\theta_0)
		&\equiv\left[\tilde{u}_i(\theta_0), -\tilde{Y}_i\right]^{\prime},\quad 
		\hat{\varepsilon}_{in}(\theta_0)\equiv \varepsilon(\theta)^{\prime}Z(Z^{\prime}Z)^{-1}Z_i,\quad \varepsilon(\theta)\equiv[\varepsilon_1(\theta),\dots, \varepsilon_n(\theta)]^{\prime}.
	\end{align}
	The CLR-type statistic is then defined as 
	\begin{align}
		\text{CLR}&=\text{CLR}(\hat{\mathcal{S}},\hat{\mathcal{T}})\equiv\hat{\mathcal{S}}^{\prime}{\hat{\mathcal{S}}}-\lambda_{\mathrm{min}}
		\left[(\hat{\mathcal{S}}, \hat{\mathcal{T}})^{\prime}(\hat{\mathcal{S}}, \hat{\mathcal{T}})\right],\label{CLR}
	\end{align}
	where $\lambda_{\min}(\cdot)$ denotes the smallest eigenvalue of a matrix. In Supplemental Appendix, we consider another CLR-type statistic of \cite{Andrews-Guggenberger(2019)} based on a different specification for $\hat{\mathcal{V}}$. The $\text{CLR}$ in \eqref{CLR} is denoted as $\text{QLR}_P$ in \citet[p.10]{Andrews-Guggenberger(2019s)},  and it is shown to be asymptotically equivalent to \cite{Moreira(2003)}'s CLR statistic 
	in the homoskedastic linear IV regression with fixed instruments and multiple endogenous variables under all strengths of identification. 
	\footnotetext{\label{evaladj}The eigenvalue-adjustment of \cite{Andrews-Guggenberger(2019s)} is as follows. 
		Let $A$ be a nonzero positive semi-definite matrix of dimension $p\times p$ that has a spectral decomposition $A=N\Delta N^{\prime}$, where $\Delta=\mathrm{diag}(\lambda_1,\dots,\lambda_p)$, 
		$\lambda_1\geq\dots\geq\lambda_{p}\geq 0,$ is the diagonal matrix that consists of the eigenvalues of $A$, and 
		$N$ is an orthogonal matrix of the corresponding eigenvectors. Given a constant $\epsilon>0$, 
		the eigenvalue adjusted matrix is defined as $A^{\epsilon}\equiv N \Delta^{\epsilon}N^{\prime}$, where 
		$\Delta^{\epsilon}\equiv\mathrm{diag}(\max\{\lambda_1,\lambda_{1}\epsilon\},\dots,  \max\{\lambda_p,\lambda_{1}\epsilon\})$. They recommend $\epsilon=0.01$. We refer to 
		\cite{Andrews-Guggenberger(2019)} for further properties of the eigenvalue-adjustment procedure.}
	\subsection{Main results}\label{PT}
	We begin by recalling the basic notion of randomization tests from Chapter 15.2 of \cite{Lehmann-Romano(2005)}. 
	Let $\mathbb{G}_n$ be the set of all permutations $\pi=[\pi(1), \dots, \pi(n)]$ of $\{1,\dots, n\}$, and denote a permutation version of a generic test statistic $R$ by $\text{PR}=R^\pi$. 
The distribution function corresponding to permutations in $\mathbb{G}_n$ is  
	\begin{equation}\label{eq: Perm dfun}
		\hat{F}_{n}^{\text{PR}}(x)\equiv\frac{1}{n!}\sum_{\pi\in\mathbb{G}_n}1(R^{\pi}\leq x),
	\end{equation}
where $1(\cdot)$ is the indicator function. The computation of \eqref{eq: Perm dfun} and related quantities (at a particular value of $x$) for permutation tests entails computing $n!$ test statistics which is impractical even for $n$ relatively small. So instead, a stochastic approximation based on uniform random draws from $\mathbb{G}_n$ is often used. To this end, let us fix $N\in \{1,\dots,|n!|\}$ and consider a subset of permutations $\mathbb{G}_n'=\{\pi_1,\dots, \pi_N\}\subseteq \mathbb{G}_n$, where $\pi_1$ is the identity permutation, and $\pi_2,\dots, \pi_N$ are i.i.d. uniformly distributed on $\mathbb{G}_n$. 
	The permutation distribution function corresponding to $\mathbb{G}_n'$ is  
	\begin{equation}\label{eq: Perm dfun Gn'}
		\tilde{F}_{N}^{\text{PR}}(x)\equiv{N}^{-1}\sum_{\pi\in\mathbb{G}_n'}1(R^{\pi}\leq x).
	\end{equation}
	Let $R^{\pi}_{(1)}\leq \dots\leq R^{\pi}_{(N)}$ be the order statistics of  
	$\{R^\pi: \pi\in\mathbb{G}_n'\}$, and for a nominal significance level $\alpha\in(0,1)$, define 
	\begin{align}
		r&\equiv N-I(N\alpha),\label{rdef}\\
		N^{+}&\equiv \vert\{j=1,\dots, N: R_{(j)}^{{\pi}}>R^\pi_{(r)}\}\vert ,\quad N^{0}\equiv \vert\{j=1,\dots, N: R_{(j)}^{{\pi}}=R^\pi_{(r)}\}\vert ,\notag\\
		a^{\text{PR}}&\equiv (N\alpha-N^{+})/N^0,\label{eq: test a def}
	\end{align}
	where $I(\cdot) $ denotes the integer part of a number. Clearly, $N^{+}$ and $N^{0}$ are the number of values $R_{(j)}^{{\pi}}=\mathrm{PR}_{(j)}, j = 1,\dots, N$, that are greater than 
	$R_{(r)}^\pi$ and equal to $R_{(r)}^\pi$, respectively. A randomization test is then defined as 
	\begin{equation}
		\phi_n^{\text{PR}}
		\equiv 
		\begin{cases}
			1 &\text{if}\  R>R^\pi_{(r)},\\
			a^{\text{PR}}&\text{if}\  R=R^\pi_{(r)},\\
			0 &\text{if}\  R<R^\pi_{(r)}.
		\end{cases}
	\end{equation}
    The finite sample results presented below hold for any $N$ but 
    we will assume that $N\to\infty$ as $n\to \infty$ for the asymptotic results.  
	We next describe the robust permutation test statistics. 
	Let $W_\pi=[W_{\pi(1)},\dots, W_{\pi(n)}]^{\prime}$ and 
	$\tilde{u}_{\pi}(\theta)=[\tilde{u}_{\pi(1)}(\theta),\dots, \tilde{u}_{\pi(n)}(\theta)]^{\prime}$  
	be an instrument matrix and a residual vector obtained by permuting the rows of $W$ and the elements of $\tilde{u}(\theta)$, respectively, for a permutation $\pi\in\mathbb{G}_n'$. Set 
	\begin{equation*}
		\tilde{\Sigma}_u\equiv \mathrm{diag}(\tilde{u}_1(\theta_0)^2,\dots, 
		\tilde{u}_n(\theta_0)^2),\ \hat{m}^\pi\equiv n^{-1}Z^{\prime}\tilde{u}_\pi(\theta_0)
		,\   
		\hat{\Sigma}^\pi\equiv n^{-1}\sum_{i=1}^nZ_iZ_i^{\prime}\tilde{u}_{\pi(i)}(\theta_0)^2.
	\end{equation*}  
	We consider two heteroskedasticity-robust permutation AR statistics defined as: 
	\begin{align}
		\text{PAR}_1
		&\equiv \text{AR}(W_\pi, X, \tilde{u}(\theta_0))=\tilde{u}(\theta_0)^{\prime}M_XW_\pi
		(W_\pi^{\prime}M_X\tilde{\Sigma}_u M_XW_\pi)^{-1}W_\pi^{\prime}M_X\tilde{u}(\theta_0),\\
		\text{PAR}_2
		&\equiv \text{AR}(W, X, \tilde{u}_{\pi}(\theta_0))
		=n\,\hat{m}^{\pi\prime}\hat{\Sigma}^{\pi-1}\hat{m}^{\pi}.\label{ARpi}
	\end{align}
	The $\text{PAR}_1$ statistic is based on the permutation of the rows of the instrument matrix $W$, and the $\text{PAR}_2$ statistic uses the permutation of the null-restricted residuals $\tilde{u}(\theta_0)$. 
	The finite sample validity of these tests are shown under the following condition. 
	\begin{assumption}\label{exact}
		$\{(W_i^{\prime}, X_i^{\prime}, u_i)^{\prime}\}_{i=1}^n$ are i.i.d., and 
		$W_i$ and $[X_i^{\prime}, u_i]^{\prime}$ are independently distributed. 
	\end{assumption}
	\cite{Andrews-Marmer(2008)} develop 
	exact rank-based AR tests under assumptions similar to Assumption \ref{exact}.  
	When $W$ is independent of $X$ and $u$, $W^{\prime}M_X\tilde{u}(\theta_0)$ and 
	$W_\pi^{\prime}M_X\tilde{u}(\theta_0)$ have the same distribution under $H_0:\theta=\theta_0$, so 
	the $\text{PAR}_1$ test is exact. On the other hand, the $\text{PAR}_2$ test is not, in general, exact because $W^{\prime}M_X\tilde{u}(\theta_0)$ and 
	$W^{\prime}M_X\tilde{u}_\pi(\theta_0)$ may not have identical distributions.  
	However, when $X=\iota$ in Assumption \ref{exact}, 
	$Z$ and $u_\pi$ are independent,  
	$Z^{\prime}\tilde{u}(\theta_0)$ and 
	$Z^{\prime}\tilde{u}_\pi(\theta_0)$ are identically distributed, and consequently, the $\text{PAR}_2$ test is exact. 
	The following result summarizes the finite sample validity of the PAR tests. 
	\begin{proposition}[Finite sample validity]\label{ExactAR}
		Under Assumption \ref{exact} and $H_0:\theta=\theta_0$, $\E[\phi_n^{\mathrm{PAR}_1}]=\alpha$ for $\alpha\in(0,1)$. If $X=\iota$ in Assumption \ref{exact}, 
		then $\E[\phi_n^{\mathrm{PAR}_2}]=\alpha$.
	\end{proposition}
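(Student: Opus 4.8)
The plan is to cast both statements as instances of the exactness of randomization tests and to invoke the construction recalled above (Chapter 15.2 of \cite{Lehmann-Romano(2005)}): if the law of the data is invariant under a finite group $G$ acting on the sample space and the permutation statistic is the original statistic evaluated at the group-transformed data, then the test $\phi_n^{\text{PR}}$, \emph{with its randomization on ties} via $a^{\text{PR}}$, has size exactly $\alpha$ (rather than merely $\le\alpha$). The work therefore reduces to (i) identifying the group action, (ii) verifying the associated invariance (the ``randomization hypothesis''), and (iii) checking that $\text{PAR}_1$ and $\text{PAR}_2$ are literally the AR statistic evaluated at the transformed data.

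First, under $H_0:\theta=\theta_0$ we have $y-Y\theta_0=X\gamma+u$, hence $\tilde{u}(\theta_0)=M_X(y-Y\theta_0)=M_X u$, which depends on $(X,u)$ but not on $W$. For $\text{PAR}_1$ I would take the group to act by permuting the rows of $W$, i.e. $g_\sigma:(W,X,u)\mapsto(W_\sigma,X,u)$, and set $T(W',X,u)\equiv\text{AR}(W',X,M_X u)$, so that the permuted statistic indexed by $\pi$ equals $T(g_\pi(W,X,u))$. The randomization hypothesis $(W_\sigma,X,u)\stackrel{d}{=}(W,X,u)$ is then immediate from Assumption \ref{exact}: the rows of $W$ are i.i.d. (hence exchangeable) and $W$ is independent of $(X,u)$, so permuting the rows of $W$ leaves the joint law unchanged. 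Exactness of $\phi_n^{\text{PAR}_1}$ follows. Note that this argument needs no restriction on $X$, precisely because $\tilde{u}(\theta_0)$ is held fixed throughout.

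For $\text{PAR}_2$ the permutation acts on the residual vector $\tilde{u}(\theta_0)$, and here the projection $M_X$ is the crux. When $X=\iota$ we have $M_X=M_\iota$ and $\bar{u}\equiv n^{-1}\sum_i u_i$ is permutation-invariant, so $\tilde{u}_\pi(\theta_0)=(M_\iota u)_\pi=u_\pi-\bar{u}\,\iota=M_\iota u_\pi$; that is, permuting the demeaned residuals coincides with demeaning the permuted errors. Using $Z=M_\iota W$ and the idempotence identity $Z'M_\iota=Z'$, one obtains $\hat{m}^\pi(\theta_0)=n^{-1}Z'u_\pi$ and $\hat{\Sigma}^\pi(\theta_0)=n^{-1}\sum_i Z_iZ_i'(u_{\pi(i)}-\bar{u})^2$, so that $\text{PAR}_2$ indexed by $\pi$ equals $\text{AR}(W,\iota,M_\iota u_\pi)=T(W,u_\pi)$ with $T(W,u')\equiv\text{AR}(W,\iota,M_\iota u')$. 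Letting the group act by $g_\sigma:(W,u)\mapsto(W,u_\sigma)$, the randomization hypothesis $(W,u_\sigma)\stackrel{d}{=}(W,u)$ again follows from Assumption \ref{exact} with $X=\iota$, since the $u_i$ are i.i.d. and independent of $W$. Exactness of $\phi_n^{\text{PAR}_2}$ follows as before.

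The main obstacle is this second part, specifically the interaction between the permutation and $M_X$. For general $X$ the identity $\tilde{u}_\pi(\theta_0)=M_X u_\pi$ fails, because the off-diagonal entries of $P_X$ are not permutation-symmetric; consequently $W'M_X\tilde{u}_\pi(\theta_0)$ and $W'M_X\tilde{u}(\theta_0)$ need not share a distribution and the randomization hypothesis breaks down. The care required is thus to isolate exactly the feature of $X=\iota$ --- invariance of the demeaning operator and of $\bar{u}$ under relabeling --- that restores the needed equivariance, and to confirm that \emph{both} the numerator $\hat{m}^\pi$ and the weighting matrix $\hat{\Sigma}^\pi$ transform as the AR statistic evaluated at $u_\pi$, not merely the numerator.
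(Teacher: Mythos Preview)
Your proposal is correct and follows essentially the same approach as the paper. The paper shows $\mathrm{PAR}_1\stackrel{d}{=}\mathrm{AR}$ via the independence and i.i.d.\ structure of Assumption~\ref{exact}, then computes $\E[\phi_n^{\mathrm{PAR}_1}]=\frac{1}{n!}\E[\sum_{\pi}\phi_n^{\mathrm{PAR}_1}(W_\pi,X,\tilde u(\theta_0))]=\alpha$ exactly as in the Lehmann--Romano randomization construction you invoke; for $\mathrm{PAR}_2$ the paper merely writes ``similar, omitted,'' so your explicit verification that $M_\iota$ commutes with permutations (via $\overline{u_\pi}=\bar u$) and hence $\tilde u_\pi(\theta_0)=M_\iota u_\pi$ is a welcome elaboration rather than a different route.
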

	Even if $[X_i^{\prime}, u_i]^{\prime}$ and $W_i$ are not distributed independently but satisfy the orthogonality condition $\E[(W_i^{\prime}, X_i^{\prime})^{\prime}u_i]=0$, we show that the permutation AR tests are asymptotically similar and heteroskedasticity-robust.\par
	The argument for permutation LM statistic is more involved because it is difficult to 
	construct an estimator of the reduced-form coefficients $\mathit{\Gamma}$ whose (asymptotic) distribution remains invariant to a permutation of the data. 
	The OLS estimators of the reduced-form coefficients and the corresponding residuals are 
	\begin{equation*}
		\mathit{\hat{\Gamma}}=(Z^{\prime}Z)^{-1}Z^{\prime}Y,\quad\hat{\xi}=(X^{\prime} X)^{-1}X^{\prime}Y,\quad  
		\hat{V}=Y-Z\mathit{\hat{\Gamma}}-X\hat{\xi}=[\hat{V}_1,\dots, \hat{V}_n]^{\prime}.
	\end{equation*}
	Now permute the residuals 
	$\hat{V}_{\pi}=[\hat{V}_{\pi(1)}^{\prime},\dots, \hat{V}_{\pi(n)}^{\prime}]^{\prime}$, and let
	\begin{equation*}
		Y^{\pi}=[Y_{1}^{\pi},\dots, Y_{n}^{\pi}]^{\prime}\equiv Z\mathit{\hat{\Gamma}}+X\hat{\xi}+\hat{V}_\pi.
	\end{equation*}  
The idea of permuting the residuals to obtain asymptotically valid randomization tests appears in \cite{Freedman-Lane(1983)} and \cite{Diciccio-Romano(2017)}.\par
The Jacobian estimator used in the permutation LM statistic is 
	\begin{align}
		\hat{J}^{\pi}
		&\equiv n^{-1}Z^{\prime}Y^{\pi}-\hat{C}^\pi\hat{\Sigma}^{\pi-1}\hat{m}^{\pi},\label{probJ}\\
		\hat{C}^\pi
		&\equiv n^{-1}\sum_{i=1}^nZ_iZ_i^{\prime}\hat{V}_{\pi(i)}\tilde{u}_{\pi(i)}(\theta).
	\end{align}
	It is not difficult to see that in strongly identified models where $\IG$ is a fixed vector of full rank, $\hat{J}^{\pi}-n^{-1}Z^{\prime}Z\IG\conpp 0$ in probability, where $\conpp$ denotes the convergence in probability with respect to the probability measure of $\pi$ uniformly distributed over $\mathbb{G}_n$ conditional on the data.  
	The permutation LM statistic is then defined as 
	\begin{equation}\label{LM1}
		\mathrm{PLM}\equiv \mathrm{LM}(Z, \tilde{u}_\pi(\theta_0), Y^{\pi})=n\,\hat{m}^{\pi\prime}\hat{\Sigma}^{\pi-1/2}P_{\hat{\Sigma}^{\pi-1/2}\hat{J}^\pi}\hat{\Sigma}^{\pi-1/2}\hat{m}^\pi. 
	\end{equation}
	Finally, we define the permutation CLR statistic as follows: 
	\begin{align}
		\mathrm{PCLR}
		&\equiv \mathrm{CLR}(\hat{\mathcal{S}}^\pi, \hat{\mathcal{T}})=\hat{\mathcal{S}}^{\pi\prime}\hat{\mathcal{S}}^{\pi}-\lambda_{\mathrm{min}}
		\left[(\hat{\mathcal{S}}^{\pi}, \hat{\mathcal{T}})^{\prime}(\hat{\mathcal{S}}^{\pi}, \hat{\mathcal{T}})\right],\label{PCLRb}
	\end{align}
	where $\hat{\mathcal{T}}$ is as defined in \eqref{hvb}, and 
	\begin{align}
		\hat{\mathcal{S}}^{\pi}
		&\equiv n^{1/2}\hat{\Sigma}^{\pi-1/2}\hat{m}^\pi=\left(\sum_{i=1}^nZ_iZ_i^{\prime}\tilde{u}_{\pi(i)}(\theta_0)^2\right)^{-1/2}Z^{\prime}\tilde{u}_\pi(\theta_0).
		\label{PS}
	\end{align}
	Analogously to \eqref{eq: Perm dfun} and \eqref{eq: Perm dfun Gn'}, define the permutation distributions of the statistic  $\mathrm{PCLR}=\mathrm{CLR}(\hat{\mathcal{S}}^\pi, \hat{\mathcal{T}})$ given $\hat{\mathcal{T}}$, as $\hat{F}_n^{\mathrm{PCLR}}(x, \hat{\mathcal{T}})
\equiv(n!)^{-1}\sum_{\pi\in\mathbb{G}_n}1(\mathrm{CLR}(\hat{\mathcal{S}}^\pi, \hat{\mathcal{T}})\leq x)$ and 
	\begin{equation}
\tilde{F}_N^{\mathrm{PCLR}}(x, \hat{\mathcal{T}})
	\equiv N^{-1}\sum_{\pi\in\mathbb{G}_n'}1(\mathrm{CLR}(\hat{\mathcal{S}}^\pi, \hat{\mathcal{T}})\leq x).
\end{equation}
Let $\mathrm{PCLR}_{(r)}(\hat{\mathcal{T}})$ be the $r$-th order statistic of $\{\mathrm{CLR}(\hat{\mathcal{S}}^\pi, \hat{\mathcal{T}}): \pi\in\mathbb{G}_n'\}$ for $r$ defined in (\ref{rdef}), and $a^{\mathrm{PCLR}}$ be as in \eqref{eq: test a def}. The nominal level $\alpha$ PCLR test $\phi_n^{\mathrm{PCLR}}$ is 1 if $\mathrm{CLR}(\hat{\mathcal{S}}, \hat{\mathcal{T}})>\mathrm{PCLR}_{(r)}(\hat{\mathcal{T}}),$ $a^{\mathrm{PCLR}}$ if $\mathrm{CLR}(\hat{\mathcal{S}}, \hat{\mathcal{T}})=\mathrm{PCLR}_{(r)}(\hat{\mathcal{T}})$ and $0$ otherwise. \par   
	Let $P$ denote the distribution of the vector $(W_i^{\prime}, X_i^{\prime}, u_i, V_i)^{\prime}$ and we shall index the relevant quantities by $P$ in the sequel. Define 
	\begin{equation}\label{eSig}
		e_i\equiv[u_i, V_i^{\prime}]^{\prime},\quad 
		\Sigma^{e}_P
		\equiv \V_P[e_i]
		=
		\begin{bmatrix}
			\sigma^2_P& \Sigma^{uV}_P\\
			\Sigma^{Vu}_P& \Sigma^{V}_P
		\end{bmatrix},
	\end{equation}
	$Z_i^{*}\equiv W_i-\E_P[W_iX_i^{\prime}](\E_P[X_iX_i^{\prime}])^{-1}X_i$, and 
	\begin{equation}\label{Vb}
		\mathcal{V}_{P}\equiv\E_P\left[
		\begin{pmatrix}
			u_i^2& -u_iV_i\\
			-u_iV_i& V_i^2
		\end{pmatrix}
		\otimes 
		(Z_i^{*}Z_i^{*\prime})
		\right]. 
	\end{equation}
	We maintain the following assumptions to develop the asymptotic results. 
	\begin{assumption}\label{A1}
		\begin{enumerate}[label=(\alph*)]
			For some constants $\delta, \delta_1>0$ and $M_0<\infty$, 
			\item
			$\{(W_i^{\prime}, X_i^{\prime}, u_i, V_i)^{\prime}\}_{i=1}^n$ are i.i.d. with distribution $P$,\label{sa}
			\item $\E_P[u_i(W_i^{\prime}, X_i^{\prime})]=0$,\label{or}
			\item $\E_P[\Vert(W_i^{\prime}, X_i^{\prime}, u_i)^{\prime}\Vert^{4+\delta}]<M_0$,\label{mo}
			\item $\lambda_{\min}(A)\geq \delta_1$ for $A\in\big\{\E_P[(W_i^{\prime}, X_i^{\prime})^{\prime}(W_i^{\prime}, X_i^{\prime})], \E_P[{Z}_i^{*}{Z}_i^{*\prime}],  
			\Sigma_P, \sigma_P^2\}$,\label{in}
			\item $\E_P[V_i(W_i^{\prime}, X_i^{\prime})]=0$,\label{or2}
			\item $\E_P[\vert V_i\vert^{4+\delta}]<M_0$,\label{m2}
			\item $\lambda_{\min}(\Sigma_P^V-\Sigma_P^{Vu}(\sigma_P^2)^{-1}\Sigma_P^{uV})\geq \delta_1$.\label{in3}
		\end{enumerate}
	\end{assumption}
	Next we define the parameter space for the permutation tests. 
	\begin{align}
		\mathcal{P}_0^{\mathrm{PAR}_1}
		&=\mathcal{P}_0^{\mathrm{PAR}_2}\equiv\{P: \text{Assumption \ref{A1}\ref{sa}-\ref{in} hold}\},\label{PSPAR}\\
		\mathcal{P}_0^{\mathrm{PLM}}
		&\equiv \{P: \text{Assumption \ref{A1} holds}\},\label{PSPLM}\\
		\mathcal{P}^{\mathrm{PCLR}}_0
		&\equiv\{P: \text{Assumption \ref{A1}\ref{sa}-\ref{m2} hold}\}.
	\end{align}
	Assumption \ref{A1} does not impose virtually any restriction on the degree of association between $Y$ and $W$ (after controlling for the effect of the exogenous covariates), thus allowing for an arbitrary identification strength on the part of the instruments.\par  
	The distribution $P$ in Assumption \ref{A1}\ref{sa} is allowed to vary with the sample size i.e. $P=P_n$. For simplicity, we suppress the dependence on $n$. 
	Assumption \ref{A1}\ref{mo} and \ref{m2} impose finite $4+\delta$ moment on the error terms and the exogenous variables, and are slightly stronger than the cross moment restrictions used in \cite{Andrews-Guggenberger(2019)}. For pointwise asymptotic results or when $P$ does not vary with $n$, weaker restrictions would be sufficient.\par  
	The independence assumption between the instruments and the error terms, which is typically made in the randomization test literature (see, for example, \cite{Imbens-Rosenbaum(2005)}), is not maintained here; the instruments are only assumed to satisfy the standard exogeneity conditions in Assumption \ref{A1}\ref{or} and \ref{or2}. If independence assumptions are maintained, as shown in Proposition \ref{ExactAR}, the 
	PAR tests are exact and do not require the finite $4+\delta$ moment assumptions.\par  
	Assumption \ref{A1}\ref{in} and \ref{in3} require that the second moment matrices of the instruments and exogenous covariates, and covariance matrices are (uniformly) nonsingular, and are similar to the assumptions employed by \cite{Andrews-Guggenberger(2017), Andrews-Guggenberger(2019)} and \cite{Andrews-Cheng-Guggenberger(2020)} when developing identification-robust (but not singularity-robust) AR, LM and CLR tests. These conditions can be restrictive but can potentially be relaxed as in  
		 \cite{Andrews-Guggenberger(2019)} (see also \cite{Dufour-Taamouti(2007)}) using generalized inverses of the relevant matrices. However, doing so would entail a considerable complication in the proof, and so, for simplicity, we maintain Assumption \ref{A1}\ref{in} and \ref{in3}.\par 
	The condition $\lambda_{\min}(\Sigma_P^V-\Sigma_P^{Vu}(\sigma_P^2)^{-1}\Sigma_P^{uV})\geq \delta_1$  is needed for the asymptotic similarity of the PLM test.\par   
	The main result of this paper is given in the following theorem which establishes the uniform asymptotic validity of the robust permutation tests. 
	\begin{theorem}[Asymptotic similarity]\label{UAV}
	Suppose that $H_0:\theta=\theta_0$ holds, and $N\to\infty$ and $n\to\infty$. Then, for $\alpha\in (0,1)$ and the permutation statistics 
		\begin{equation*}
			\mathrm{PR}\in\{\mathrm{PAR}_1, \mathrm{PAR}_2, \mathrm{PLM}, \mathrm{PCLR}\}
		\end{equation*}
		with the corresponding 
		parameter spaces $\mathcal{P}_0\in\{\mathcal{P}^{\mathrm{PAR}_1}_0, \mathcal{P}^{\mathrm{PAR}_2}_0, \mathcal{P}^{\mathrm{PLM}}_0, \mathcal{P}^{\mathrm{PCLR}}_0\}$, respectively,  
		\begin{equation*}
			\limsup_{n\rightarrow \infty}\sup_{P\in\mathcal{P}_0} \E_{P}[\phi_n^{\mathrm{PR}}]
			=\liminf_{n\rightarrow \infty}\inf_{P\in\mathcal{P}_0} \E_{P}[\phi_n^{\mathrm{PR}}]
			=\alpha.
		\end{equation*}
	\end{theorem}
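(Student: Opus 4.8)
The plan is to establish the two equalities simultaneously through the drifting-subsequence (localization) machinery of \cite{Andrews-Guggenberger(2017), Andrews-Guggenberger(2019)} and \cite{Andrews-Cheng-Guggenberger(2020)}, combined with a conditional permutation central limit theorem. First I would reduce the uniform statement to convergence along sequences: it is standard (see the characterization used in \cite{Andrews-Cheng-Guggenberger(2020)}) that $\limsup_{n}\sup_{P\in\mathcal{P}_0}\E_P[\phi_n^{\mathrm{PR}}]=\liminf_{n}\inf_{P\in\mathcal{P}_0}\E_P[\phi_n^{\mathrm{PR}}]=\alpha$ follows once one shows $\E_{P_n}[\phi_n^{\mathrm{PR}}]\to\alpha$ for an arbitrary sequence $\{P_n\}\subset\mathcal{P}_0$. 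Passing to a subsequence, I would arrange that all the finite-dimensional population quantities that drive the limit converge, namely the moment matrices of Assumption \ref{A1}\ref{in}, the heteroskedasticity-covariance objects $\mathcal{V}_{P,a}$, $\mathcal{V}_{P,b}$, $\Sigma^e_P$, and a suitably rescaled reduced-form matrix $\mathit{\Gamma}$ encoding the strength of identification. This reduces the problem to the joint behavior, along a single drifting sequence, of the observed statistic $\mathrm{R}$ and of its permutation distribution.

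The second ingredient is the limit law of the \emph{observed} statistic. Along the drifting subsequence the heteroskedasticity-robust statistics here are exactly those of \cite{Andrews-Guggenberger(2019)} and \cite{Andrews-Cheng-Guggenberger(2020)}, so their limit theory applies: $\mathrm{AR}\cond\chi^2_k$; $\mathrm{LM}\cond\chi^2_d$ under every identification strength, by the projection structure of $\hat J(\theta_0)$; and $\mathrm{CLR}_a,\mathrm{CLR}_b$ converge to the corresponding conditional-on-nuisance CLR limit driven by the Gaussian limit of $(\hat{\mathcal{S}},\hat{\mathcal{T}})$. Write the relevant limit law as $\mathcal{R}$, with distribution function $F_{\mathcal{R}}$; continuity of $F_{\mathcal{R}}$ (no atoms) follows from the nonsingularity in Assumption \ref{A1}\ref{in} and \ref{in3}.

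The core step, and the main obstacle, is to show that the random permutation distribution converges in $P_n$-probability to the \emph{same} law $\mathcal{R}$; here the studentization is decisive, extending \cite{Diciccio-Romano(2017)} to the IV setting. Conditionally on the data I would apply a combinatorial (Hoeffding) central limit theorem to $n^{1/2}\hat m^\pi(\theta_0)$, and to $W_\pi^{\prime}M_X\tilde u(\theta_0)$ for $\mathrm{PAR}_1$. Because $\iota$ lies in $X$ the residuals satisfy $\sum_i\tilde u_i(\theta_0)=0$, so the conditional permutation mean vanishes and the conditional permutation variance of the numerator is $(n^{-1}\sum_iZ_iZ_i^{\prime})(n^{-1}\sum_j\tilde u_j(\theta_0)^2)+o_p(1)$. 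The crucial point is that this matches the permutation-probability limit of the studentizing matrix $\hat\Sigma^\pi(\theta_0)=n^{-1}\sum_iZ_iZ_i^{\prime}\tilde u_{\pi(i)}(\theta_0)^2$, so the studentized permutation AR is asymptotically $\chi^2_k$, identical to the observed AR even though permutation destroys the heteroskedastic dependence between $Z_i$ and $u_i$. For $\mathrm{PLM}$ and $\mathrm{PCLR}$ I would additionally control the permuted Jacobian, showing $\hat J^\pi(\theta_0)-n^{-1}Z^{\prime}Z\mathit{\Gamma}\conpp 0$ in the strongly identified case and, in general, that $(\hat{\mathcal{S}}^\pi,\hat{\mathcal{T}})$ converges conditionally to the same Gaussian limit as $(\hat{\mathcal{S}},\hat{\mathcal{T}})$, the latter held fixed across permutations for $\mathrm{PCLR}$, which is precisely why conditioning on $\hat{\mathcal{T}}$ is built into the construction. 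The Lindeberg condition for the combinatorial CLT under drifting $P_n$ is verified using the $4+\delta$ moment bounds in Assumption \ref{A1}\ref{mo} and \ref{m2}.

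Finally I would assemble the pieces. Steps two and three give $\hat F_n^{\mathrm{PR}}(x)\conp F_{\mathcal{R}}(x)$ at each continuity point and $\mathrm{R}\cond\mathcal{R}$ under $P_n$; since $F_{\mathcal{R}}$ is continuous and strictly increasing near its $(1-\alpha)$-quantile $q_{1-\alpha}$, the permutation critical value $\mathrm{R}^\pi_{(r)}$ converges in probability to $q_{1-\alpha}$. A converging-together argument then yields $\E_{P_n}[\phi_n^{\mathrm{PR}}]\to 1-F_{\mathcal{R}}(q_{1-\alpha})=\alpha$, with the randomization on ties (the term $a^{\mathrm{PR}}$) absorbing the boundary contribution exactly; this last passage is the generic randomization-test limit of \cite{Lehmann-Romano(2005)} and \cite{Chung-Romano(2013)}. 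Since $\{P_n\}$ was arbitrary, both the $\limsup\sup$ and the $\liminf\inf$ equal $\alpha$, which completes the proof.
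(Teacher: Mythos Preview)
Your overall architecture matches the paper's: reduce to drifting subsequences via \cite{Andrews-Cheng-Guggenberger(2020)}, establish the limit law of the observed statistic from \cite{Andrews-Guggenberger(2019)}, show the permutation distribution has the same limit via a Hoeffding combinatorial CLT with studentization absorbing the discrepancy, and conclude by the converging-together argument of \cite{Lehmann-Romano(2005)}. For $\mathrm{PAR}_1$, $\mathrm{PAR}_2$ and $\mathrm{PCLR}$ your sketch is essentially what the paper does.

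The gap is your treatment of $\mathrm{PLM}$ under weak identification. You write that you would show $(\hat{\mathcal{S}}^\pi,\hat{\mathcal{T}})$ converges to the same Gaussian limit as $(\hat{\mathcal{S}},\hat{\mathcal{T}})$, but $\mathrm{PLM}$ does not condition on a fixed $\hat{\mathcal{T}}$; it uses the permuted Jacobian $\hat{J}^\pi$ built from Freedman--Lane residuals $Y^\pi=Z\hat{\mathit{\Gamma}}+X\hat{\xi}+\hat{V}_\pi$. The paper first derives the joint permutation limit of $(n^{1/2}\hat{m}^\pi,\,n^{1/2}(\hat{J}^\pi-\hat{G}))$, obtaining $\mathrm{vec}(J_h^\pi)\sim N(0,(\Sigma^V-\Sigma^{Vu}(\sigma^2)^{-1}\Sigma^{uV})\otimes Q_{zz})$ independent of $m_h^\pi$. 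The step you omit, and which does not follow from your description, is that the limiting projection $P_{(\Sigma^\pi)^{-1/2}\Sigma^{1/2}\Delta_h^\pi}$ is well-defined, i.e.\ $P^\pi[\mathrm{rank}(\Delta_h^\pi)=d]=1$. The paper verifies this via Corollary~16.2 of \cite{Andrews-Guggenberger(2017s)}, using precisely that Kronecker covariance together with $\lambda_{\min}(\Sigma^V-\Sigma^{Vu}(\sigma^2)^{-1}\Sigma^{uV})\geq\delta_1$ from Assumption~\ref{A1}\ref{in3}. Without this rank argument the $\chi^2_d$ limit for $\mathrm{PLM}$ does not follow outside the strongly identified case. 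A smaller omission: for $\mathrm{PCLR}$ the paper requires a separate eigenvalue-rate lemma (their Lemma~\ref{EvalCR}) to show the first $q$ eigenvalues of $n\hat{U}^{+\prime}\hat{J}^{+\prime}\hat{H}^{\prime}\hat{H}\hat{J}^{+}\hat{U}^{+}$ diverge while the rest are $o_{p^\pi}((n^{1/2}\tau_{ln})^2)$, which is what reduces $\lambda_{\min}$ to the $(k-q)$-dimensional block; ``same Gaussian limit'' is too coarse to deliver this.
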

	Theorem \ref{UAV} shows that the permutation tests are asymptotically similar (hence asymptotically of correct size) i.e. the asymptotic null rejection probability is equal to the nominal significance over a large class of data generating processes.
	Thus, the advantage of the $\mathrm{PAR}_1$ and $\mathrm{PAR}_2$ tests relative to the existing tests is that they 
	inherit the desirable properties of both the exact and asymptotic AR tests (including other simulation-based tests that are 
	asymptotically valid); 
	the PAR tests 
	are exact under the same assumptions as the rank-based tests are, and asymptotically pivotal and heteroskedasticity-robust under the assumptions used to show the validity of the asymptotic tests. \par 
	The main technical tool for the asymptotics of the permutation statistics are 
	the combinatorial central limit theorem based on a Lindeberg type condition \citep{Motoo(1956)} for the sums
	\begin{equation}\label{eq: key quants}
		n^{-1/2}\sum_{i=1}^nZ_i\tilde{u}_{\pi(i)}(\theta_0),\quad n^{-1/2}\sum_{i=1}^nZ_i\hat{V}_{\pi(i)}.
	\end{equation}
    The fact that the intercept term $\iota$ is included in $X$ in \eqref{SE}-\eqref{RE} implies that the two quantities in \eqref{eq: key quants}
    have mean zero with respect to the distribution of $\pi$, which, in turn, plays an important role for the asymptotic validity of the permutation tests (see also Remark 3.1 of 
    \cite{Diciccio-Romano(2017)}).\par 
	The proof of asymptotic similarity of the permutation tests uses the generic method of \cite{Andrews-Cheng-Guggenberger(2020)} for establishing the asymptotic size of tests.\par 
	Next, we shall derive the asymptotic distribution of the permutation statistics under 
	the local alternatives $\theta_n=\theta_0+h_\theta n^{-1/2}$ with $h_\theta \in\mathbb{R}$ fixed, assuming 
	strong identification. Let $\chi_l^2(\eta^2)$ denote noncentral chi-square random variable with degrees of freedom $l$ and noncentrality parameter $\eta^2$. 
\begin{proposition}[Asymptotic local power under strong identification]\label{LP}
	Let Assumption \ref{A1} hold, and $h_\theta \in\mathbb{R}$ be fixed. Assume that the model is strongly identified in the sense that 
$n^{1/2}\Vert\mathit{\Gamma}\Vert\to\infty$ as $n\to\infty$, and for $G_P\equiv\E_P[{Z}_i^{*}{Z}_i^{*\prime}]\mathit{\Gamma}$, $\eta^2\equiv \lim_{n\to\infty}G_P'\Sigma_P^{-1}G_Ph_\theta^2$ exists. Then, under $H_1:\theta_n=\theta_0+h_\theta n^{-1/2}$,
	$\mathrm{AR}\cond \chi^2_k(\eta^2)$, $\mathrm{LM}\cond \chi^2_1(\eta^2)$  and $\mathrm{CLR}\cond \chi^2_1(\eta^2)$ as $n\to\infty$, and 
	\begin{align}
		&\sup_{x\in\mathbb{R}}\vert \tilde{F}_N^{\mathrm{PAR}_1}(x)-P[\chi^2_k\leq x]\vert \conp 0,\label{PARDist}\\
		&\sup_{x\in\mathbb{R}}\vert \tilde{F}_N^{\mathrm{PAR}_2}(x)-P[\chi^2_k\leq x]\vert \conp 0,\label{PARDist2}\\
		&\sup_{x\in\mathbb{R}}\vert \tilde{F}_N^{\mathrm{PLM}}(x)-P[\chi^2_1\leq x]\vert \conp 0,\label{PLMDist}\\
		&\sup_{x\in\mathbb{R}}\vert \tilde{F}_N^{\mathrm{PCLR}}(x, \hat{\mathcal{T}})-P[\chi^2_1\leq x]\vert\conp 0,\label{PCLRDist}\\
	\end{align}
as $N\to\infty$ and $n\to\infty$.
\end{proposition}
It follows that the asymptotic local power of the PAR tests is equal to that of the asymptotic AR test given by 
$P[\chi_k^2(\eta^2)>r_{1-\alpha}(\chi^2_k)]$, where $r_{1-\alpha}(\chi^2_k)$ is the $1-\alpha$ quantile of $\chi^2_k$ random variable. Moreover, under strong identification, the PLM and PCLR tests, and the asymptotic LM and CLR tests have the same asymptotic local power equal to $P[\chi_1^2(\eta^2)>r_{1-\alpha}(\chi^2_1)]$.
	\subsection{PAR tests with a diverging number of IVs}\label{subsec: manyIV}
This section considers an extension of the PAR tests to cases where the number of IVs, $k$, may grow with the sample size $n$. This extension comprises two main parts. First, the derivation of the asymptotic distribution of the PAR statistics requires a proof strategy different from the fixed-$k$ case. This is accomplished by 
interpreting a dominant term in the PAR statistic (after centering and scaling) as a double-indexed permutation statistic. 
Second, we derive the asymptotic distribution of the AR statistic under the condition that $k^3/n\to 0$ as $n\to\infty$, 
which is the same rate used by \cite{Andrews-Stock(2007)} who establish the asymptotic validity of the AR, LM and CLR tests in homoskedastic linear IV models.

\par 

For simplicity, we assume that $X=\iota$, although the proof can be extended to incorporate a fixed number exogenous covariates in a straightforward manner. In the present case, the $\mathrm{PAR}_1$ and $\mathrm{PAR}_2$ tests are (distributionally) equivalent because 
$W_\pi' M_{\iota}u(\theta_0)=\sum_{i=1}^n(W_{\pi(i)}-\bar{W})u_i(\theta_0)=\sum_{i=1}^n(W_{i}-\bar{W})u_{\pi^{-1}(i)}(\theta_0)=Z'u_{\pi^{-1}}(\theta_0)$ and 
$W_\pi^{\prime}M_\iota\tilde{\Sigma}_u(\theta_0) M_\iota W_\pi=\sum_{i=1}^n(W_{\pi(i)}-\bar{W})(W_{\pi(i)}-\bar{W})'\tilde{u}_i(\theta_0)^2=\sum_{i=1}^n(W_{i}-\bar{W})(W_{i}-\bar{W})'\tilde{u}_{\pi^{-1}(i)}(\theta_0)^2$, where $\bar{W}\equiv n^{-1}W'\iota$ and $\pi^{-1}$ is 
the inverse of $\pi$ i.e. $\pi\circ\pi^{-1}$ is identity permutation, and is uniformly distributed over $\mathbb{G}_n$. When $Z$ and $u(\theta_0)$ are independent under $H_0:\theta=\theta_0$, the PAR tests are still exact since Proposition \ref{ExactAR} holds for any $k$ provided the statistics are well-defined. 

Let $P_{ij}\equiv Z_i'(Z'Z)^{-1}Z_j$ denote the $(i,j)$ element of $P_Z$, $W_{ij}, j=1,\dots, k,$ be the $j$-th element of $W_i$, $\sum_{i\neq j}$ denote the summation over distinct values of the indices $i, j=1,\dots, n$, and $O_{a.s.}(\cdot)$ abbreviate $O(\cdot)$ almost surely. The asymptotic validity of the PAR tests are established under the following conditions in addition to Assumption \ref{A1}\ref{sa},\ref{or} and \ref{in}.

\begin{assumption}\label{A2}~ As $k,n\to\infty$, 
	\begin{enumerate}[label=(\alph*)]
		\item\label{A2a} $\max_{1\leq i\leq n}\left(\sum_{j=1, j\neq i}^n\vert P_{ij}\vert \right)=O_{a.s.}\left(\frac{\sum_{i\neq j}P_{ij}^2}{n\max_{i,j, i\neq j} |P_{ij}|}\right)$,
		\item\label{A2b} $\frac{\sum_{i\neq j}P_{ij}^2}{n^2\max_{i,j, i\neq j} P_{ij}^2}\conas 0$,
		\item\label{A2c} $k^{-1}\sum_{i=1}^nP_{ii}^2\conas 0$,
		\item\label{A2d} $\sup_{j\leq k}(\E[W_{1j}^4u_1^4]+\E[W_{1j}^4]+\E[u_1^4])<M_0<\infty$. 
	\end{enumerate}
\end{assumption}
Assumptions \ref{A2} is key for deriving the asymptotic distribution of the PAR statistics under the diverging-$k$ asymptotics. 
Assumptions \ref{A2}\ref{A2a} states that the row sums of the off-diagonal elements of $P_Z$ are roughly equal, Assumptions \ref{A2}\ref{A2b} requires that the off-diagonal elements of $P_Z$ are small relative to its largest element. 
By standard arguments, $\sum_{i=1}^nP_{ii}^2
\leq \sum_{i=1}^nP_{ii}=k$, hence $k^{-1}\sum_{i=1}^nP_{ii}^2\leq 1$. Assumption 
\ref{A2}\ref{A2c} makes a stronger requirement that the left-hand side of the latter inequality actually converges to $0$. 

\par To gain insight into Assumption \ref{A2}\ref{A2a}-\ref{A2c}, let $W$ be a matrix of group dummy variables i.e. the $j$-th column of $W$ has entries 
$W_{ij}=1, i=n_{j-1}+1,\dots, n_{j-1}+n_j, j=1,\dots, k,$ with $n_0=0, n=\sum_{j=1}^kn_j$, and $0$ otherwise. Let $n_{\min}\equiv \min_{1\leq j\leq k}n_j$ and $n_{\max}\equiv \max_{1\leq j\leq k}n_j$. By direct calculations, one can see that Assumption \ref{A2}\ref{A2a} is equivalent to $(1-n_{\max}^{-1})n n_{\min}^{-1}/(k-\sum_{j=1}^kn_j^{-1})=O_{a.s.}(1)$, which holds if $n_{\max}/n_{\min}$ is bounded. Assumption \ref{A2}\ref{A2b} is equivalent to $(k-\sum_{j=1}^k n_j^{-1})/(n^2/n_{\min})$ being bounded. The latter holds because 
$(k-\sum_{j=1}^k n_j^{-1})/(n^2/n_{\min})\leq k^{-2}(k-\sum_{j=1}^k n_j^{-1})\leq k^{-2}(k-n^{-1}k^2)\leq 1$ using 
$(n/n_{\min})\leq k$ and $\sum_{j=1}^k n_j^{-1}\geq k^2/(\sum_{j=1}^kn_j)$ which follows from the convexity of $x\mapsto x^{-1}$. Finally, \ref{A2}\ref{A2c} holds if $n_{\min}\conas \infty$, since $k^{-1}\sum_{i=1}^nP_{ii}^2=k^{-1}\sum_{j=1}^kn_j^{-2}\leq n_{\min}^{-2}\conas 0$. 

\par 

Assumption \ref{A2}\ref{A2d} replaces Assumption \ref{A1}\ref{mo}. 
The boundedness of the cross moment $\E[W_{1j}^4u_1^4]$ is needed because we allow for a growing $k$ which, in turn, imposes a restriction on the consistency of the covariance matrix estimator. Furthermore, the marginal moment conditions in Assumption \ref{A2}\ref{A2d} are slightly weaker than Assumption \ref{A1}\ref{mo} 
because we only show the tests have asymptotically correct level, as opposed to size (uniform validity). The main results of this section is as follows. 
\begin{proposition}\label{prop: ManyIV}
Let Assumptions \ref{A1}\ref{sa},\ref{or}, \ref{in}, and \ref{A2} hold with $P$ independent of $n$. Assume that $k, N\to\infty$ and $k^3/n\to 0$ as $n\to\infty$. If $H_0:\theta=\theta_0$ is true, then, for $\alpha\in (0,1)$ and the permutation statistics $\mathrm{PAR}\in\{\mathrm{PAR}_1, \mathrm{PAR}_2\}$, it holds that 
\begin{equation*}
	\lim_{n\to \infty}\E_{P}[\phi_n^{\mathrm{PAR}}]=\alpha.
\end{equation*}
\end{proposition}
The asymptotic distribution of the PAR statistics is obtained as follows. 
Letting 
$\tilde{\sigma}^2\equiv n^{-1}\tilde{u}(\theta_0)'\tilde{u}(\theta_0)$, we can rewrite 
	\begin{align}\label{eq: PAR decomp}
		\mathrm{PAR}_2
		&=\tilde{u}_\pi(\theta_0)'Z
		\left(\left(\sum_{i=1}^nZ_iZ_i'\tilde{u}_{\pi(i)}(\theta_0)^2\right)^{-1}-(Z'Z)^{-1}\tilde{\sigma}^{-2}\right)Z'\tilde{u}_\pi(\theta_0)\notag\\
&\quad+\sum_{i=1}^nP_{ii}\tilde{u}_{\pi(i)}(\theta_0)^2\tilde{\sigma}^{-2}+\sum_{i\neq j}P_{ij}\tilde{u}_{\pi(i)}(\theta_0)\tilde{u}_{\pi(j)}(\theta_0)\tilde{\sigma}^{-2}.
	\end{align}	
	It can be shown that the first summand in the above expression is negligible while the second summand converges to $k$. 
	The key insight for establishing the asymptotic distribution of the third summand is to view it as 
	a double-indexed permutation statistic of the form $\sum_{i\neq j}a_{ij}b_{\pi(i)\pi(j)}$ to which one can apply a 
	CLT established by \cite{Pham-Mocks-Sroka(1989)}.\par 
	 
	Recently, \cite{Mikusheva-Sun(2022)} propose a heteroskedasticity-robust Jacknife AR (JAR) test 
	that is based on $\sum_{i\neq j}P_{ij}\tilde{u}_{i}(\theta_0)\tilde{u}_{j}(\theta_0)$ and is asymptotically valid 
	under the condition $k/n\to \tau\in(0,1)$ among others. Our proof strategy is likely to be useful for deriving the asymptotic distribution of a permutation version of their statistic. 
We leave this extension as well as those of the PLM and PCLR statistics under conditions less stringent than those 
in Assumption \ref{A2} to future research.
	\section{Simulations}\label{sec: Sim}
	This section presents a simulation evidence on the performance of the proposed 
	permutation tests. The data were generated according to 
	\begin{align*}
		y_i&= Y_i\theta+\gamma_1+X_{2i}^{\prime}\gamma_2+u_i,\\
		Y_i&=W_i^{\prime}\mathit{\Gamma}+\psi_1+X_{2i}^{\prime}\mathit{\Psi}_2+V_i, \quad i=1,\ldots,n,
	\end{align*}
	where $\theta=0$, $d=1$, $\mathit{\Gamma}=(1,\dots, 1)^{\prime}\sqrt{\lambda/(nk)}\,(k\times 1)$ and $\lambda\in\{0.1, 4, 20\}$, $X_{2i}$ is $(p-1)\times 1$ vector of non-constant included exogenous variables with $p\in\{1, 5\}$, and $V_i=\rho u_i+\sqrt{1-\rho^2}\epsilon_i$ with $\rho=0.5$. The parameters 
	$\gamma_1, \gamma_2, \psi_1$, and $\mathit{\Psi}_2$ are set equal to $0$. As specified below, $W_i$ has zero mean and unit covariance matrix except for the first part of simulations in Section \ref{Htails}, hence  
	$\lambda=n\mathit{\Gamma}^{\prime}\E[W_iW_i^{\prime}]\mathit{\Gamma}\approx\mathit{\Gamma}^{\prime}W^{\prime}W\mathit{\Gamma}$. We consider the values $\lambda\in\{0.1, 4, 20\}$ which correspond to very weak, weak and strong identification. The tested restriction is $H_0:\theta=\theta_0=0$.\par 
	We implement the heteroskedasticity-robust asymptotic tests denoted as AR, LM, CLR, their permutation versions PAR$_1$, PAR$_2$, PLM, PCLR, the normal score and Wilcoxon score rank-based AR tests of \cite{Andrews-Marmer(2008)} denoted as RARn and RARw, respectively, the normal score and Wilcoxon score rank-based CLR statistics of \cite{Andrews-Soares(2007)} denoted as RCLRn and RCLRw, respectively, 
	and the wild bootstrap 
	AR and LM tests of \cite{Davidson-MacKinnon(2012)} denoted as WAR and WLM, respectively. 
	For both the asymptotic and permutation CLR-type tests, we do not make the  eigenvalue-adjustment (i.e. $\epsilon=0$ in footnote \ref{evaladj}).\par 
	In all simulations below, the null rejection probabilities of the tests (including CLR, PCLR, the rank-based and the wild bootstrap tests considered below) are computed using 2000 replications with 999 simulated samples for each replication while we display some power curves associated with the tests in Supplemental Appendix. We consider the following three cases in turn. 
	\subsection{Heavy tails}\label{Htails}
	To examine the finite sample validity, we consider heavy-tailed observations where each element of the random vector $(W_i^{\prime}, X_{2i}^{\prime}, u_i, \epsilon_i)^{\prime}$ is drawn independently from standard Cauchy distribution, and set  $k\in\{5, 10\}$, $n\in\{50, 100\}$, and $\lambda=4$.\par   
	Table \ref{HT} presents the null rejection probabilities. The asymptotic tests all underreject. 
	The PAR$_1$ test has nearly correct levels as predicted by Theorem \ref{ExactAR} in all cases, and the $\mathrm{PAR}_2$ test, which is exact only when 
	$p=1$, has slightly more accurate rejection rates in the case $p=1$ than in the case $p=5$. 
	The RARn, RARw, RCLRn and RCLRw tests are all robust against heavy-tailed errors when the IVs, the exogenous covariates and the error terms are independent as stated in Assumption \ref{ExactAR}. 
	The RARn and RARw tests are exact and this is borne out in the simulation results. 
	The RCLRn and RCLRw tests slightly underreject which may be attributed to their asymptotic nature.\par  
	Note, however, that, under the independence assumption and heavy-tailed errors, the rank-based tests have better power properties than the asymptotic tests, see \cite{Andrews-Soares(2007)} and \cite{Andrews-Marmer(2008)}.   
	In such cases, the permutation tests are likely to be dominated by the rank-based tests in terms of power given the local asymptotic equivalence between the permutation tests and the asymptotic tests under strong identification 
	established in Proposition \ref{LP}.   
	\subsection{Independence vs. standard exclusion restriction}\label{subsec: ISER}
	The next set of simulations highlights the difference between the independence and the standard exclusion restriction assumption in homoskedastic setting. 
	Two cases are considered: 
	\begin{enumerate}
		\item $(W_i^{\prime}, X_{2i}^{\prime}, u_i, \epsilon_i)^{\prime}\sim t_{5}[0, I_{k+p+1}]$, where $t_{5}[0, I_{k+p+1}]$ stands for $(k+p+1)$-variate $t$-distribution with degrees of freedom $5$ and covariance matrix $I_{k+p+1}$. Under this setting, these random variables have finite fourth moments, and $(W_i^{\prime}, X_{2i}^{\prime})^{\prime}$ satisfies the standard exclusion restriction in Assumption \ref{A1}: $\E[(u_i, V_i^{\prime})^{\prime}(W_i^{\prime}, X_{i}^{\prime})]=0$ but $(u_i, \epsilon_i)^{\prime}$ and $(W_i^{\prime}, X_{i}^{\prime})^{\prime}$ are dependent;\footnote{Because the distribution is fixed i.e. does not vary with the sample size, the permutation tests remain asymptotically valid under the finite fourth moment assumption.}
		\item $(W_i^{\prime}, X_{2i}^{\prime}, u_i, \epsilon_i)^{\prime}\sim N[0, I_{k+p+1}]$. Clearly, in this case the instruments and the error terms are independent.  
	\end{enumerate}
	The result for the overidentified models with $n=100$ and $k=5$ is reported in 
	Table \ref{TI2}. The permutation tests perform better than the asymptotic tests regardless of whether the 
	instruments and the error terms are independent or not. 
	The rank-based tests, displayed in the bottom part of Table \ref{TI2}, 
	have nearly correct level in the independent case but overreject in the dependent case.\par
	The result for the just-identified design is $k=1$ are similar to the over-identified case, thus is relegated to 
	Supplemental Appendix. 
	\subsection{Conditional heteroskedasticity}\label{subsec: CHR}
	The next part of the simulations considers designs with conditional heteroskedasticity. 
	The data are generated according to 
	(\ref{SE}) and (\ref{RE}) with 
	\begin{align*}
		u_i
		&=W_{i1}\upsilon_{i},\\
		V_i
		&=\rho u_i+\sqrt{1-\rho^2}\epsilon_{i},
	\end{align*}
	where $W_{i1}$ is the first element of $W_i$, and similarly to the previous case 
	\begin{enumerate}
		\item $(W_i^{\prime}, X_{2i}^{\prime}, \upsilon_i, \epsilon_i)^{\prime}\sim t_{5}[0, I_{k+p+1}]$;
		\item $(W_i^{\prime}, X_{2i}^{\prime}, \upsilon_i, \epsilon_i)^{\prime}\sim N[0, I_{k+p+1}]$.
	\end{enumerate}
	The sample size and the number of instruments are $n=100$ and $k\in\{2, 5, 10\}$. 
	The results for $\lambda=4$ are displayed in Tables \ref{hett2}. The asymptotic AR test tends to underreject and 
	performs poorly compared to  
	the LM statistic. The permutation tests perform better than their asymptotic counterparts in most cases, and 
	on par with the wild bootstrap in the independent case. However, when the instruments satisfy the standard exogeneity condition, and there are included exogenous variables, the permutation tests appear to have an edge over the wild bootstrap AR test which overrejects. 
	Qualitatively, the same observations made in the homoskedastic case for the rank-based tests are also observed 
	in the heteroskedastic case as the rank-based tests reject by a substantial margin in the dependent case. 
	See Supplemental Appendix for additional simulation evidence for the cases $\lambda\in\{0.1, 20\}$.  
	\begin{table}[htbp]
		\small
		\caption{{Null rejection probabilities at $5\%$ level. Cauchy distributed IVs, controls and error terms. }}%
		\begin{center}
			\label{HT}
			\begin{tabular}{lcccccccc}
				\hline \hline
				$\lambda=4$ & \multicolumn{4}{c}{$n=50$}
				& \multicolumn{4}{c}{$n=100$}
				\\
				\cmidrule(lr){1-1}\cmidrule(lr){2-5}\cmidrule(lr){6-9}
				&\multicolumn{2}{c}{$p=1$} &\multicolumn{2}{c}{$p=5$} &\multicolumn{2}{c}{$p=1$} &\multicolumn{2}{c}{$p=5$}
				\\ 
				\cmidrule(lr){2-3}\cmidrule(lr){4-5}\cmidrule(lr){6-7}\cmidrule(lr){8-9}
				Tests & $k=5$ & $k=10$&  $k=5$ & $k=10$  & $k=5$ & $k=10$ & $k=5$ & $k=10$\\ 
				\cmidrule{1-1}\cmidrule(lr){2-3}\cmidrule(lr){4-5}\cmidrule(lr){6-7}\cmidrule(lr){8-9}
			AR & 0.95 & 0.55 & 0.85 & 0.25 & 0.50 & 1.05 & 0.25 & 0.40 \\ 
			LM & 2.35 & 4.25 & 2.05 & 3.10 & 2.10 & 4.45 & 1.85 & 3.45 \\ 
			CLR & 2.10 & 3.60 & 1.60 & 2.60 & 1.75 & 3.85 & 1.25 & 3.25 \\ 			
				\cmidrule{1-1}\cmidrule(lr){2-3}\cmidrule(lr){4-5}\cmidrule(lr){6-7}\cmidrule(lr){8-9}
	PAR$_1$ & 4.40 & 5.30 & 4.70 & 5.60 & 4.05 & 4.95 & 4.45 & 4.70 \\ 
	PAR$_2$ & 4.40 & 5.45 & 3.40 & 3.50 & 3.95 & 4.75 & 3.25 & 3.05 \\ 
		\cmidrule{1-1}\cmidrule(lr){2-3}\cmidrule(lr){4-5}\cmidrule(lr){6-7}\cmidrule(lr){8-9}
		RARn & 4.25 & 4.55 & 4.10 & 4.15 & 5.10 & 4.05 & 4.90 & 3.75 \\ 
		RARw & 4.10 & 4.85 & 4.30 & 4.55 & 5.15 & 3.95 & 4.55 & 3.60 \\ 
		RCLRn & 3.15 & 3.75 & 4.20 & 5.10 & 4.55 & 3.50 & 3.85 & 4.00 \\ 
		RCLRw & 2.35 & 3.25 & 3.35 & 4.25 & 3.50 & 2.80 & 3.25 & 3.45 \\  \hline \hline
			\end{tabular}
		\end{center}
		\footnotesize{Note: AR, LM, and CLR are the heteroskedasticity-robust asymptotic tests, PAR$_1$ and PAR$_2$ denote the robust permutation AR tests, RARn and RARw denote the normal score and Wilcoxon score rank-based AR tests of \cite{Andrews-Marmer(2008)}, and RCLRn and RCLRw denote the normal score and Wilcoxon score rank-based CLR tests of \cite{Andrews-Soares(2007)} respectively. The number of replications is 2000, and the number of permutation samples for each replication is $999\, (N=1000)$.}
	\end{table}
	\begin{table}[htbp]
		\small
		\caption{{Null rejection probabilities at $5\%$ level. Over-identified homoskedastic model with $k=5$, $n=100$.}}%
		\begin{center}
			\label{TI2}
			\setlength\tabcolsep{3pt} 
			\begin{tabular}{lcccccccccccc}
				\hline \hline
				$n=100$ &\multicolumn{4}{c}{$\lambda=0.1$} & \multicolumn{4}{c}{$\lambda=4$}
				&\multicolumn{4}{c}{$\lambda=20$} \\ 
				\cmidrule(lr){1-1}\cmidrule(lr){2-5}\cmidrule(lr){6-9}\cmidrule(lr){10-13}
				$k=5$
				&\multicolumn{2}{c}{$t_{5}[0, I_{k+p+1}]$} & \multicolumn{2}{c}{$N[0, I_{k+p+1}]$}
				&\multicolumn{2}{c}{$t_{5}[0, I_{k+p+1}]$} & \multicolumn{2}{c}{$N[0, I_{k+p+1}]$} 
				&\multicolumn{2}{c}{$t_{5}[0, I_{k+p+1}]$} & \multicolumn{2}{c}{$N[0, I_{k+p+1}]$}\\ 
				\cmidrule(lr){1-1}\cmidrule(lr){2-3}\cmidrule(lr){4-5}\cmidrule(lr){6-7}
				\cmidrule(lr){8-9}\cmidrule(lr){10-11}\cmidrule(lr){12-13}
				Tests &  $p=1$   & $p=5$      &  $p=1$   & $p=5$   &$p=1$   & $p=5$    &  $p=1$   & $p=5$   &$p=1$   & $p=5$ &  $p=1$   & $p=5$     \\ 
				\cmidrule(lr){1-1}\cmidrule(lr){2-3}\cmidrule(lr){4-5}\cmidrule(lr){6-7}
				\cmidrule(lr){8-9}\cmidrule(lr){10-11}\cmidrule(lr){12-13}
			AR & 3.15 & 4.50 & 4.50 & 4.25 & 3.15 & 4.50 & 4.50 & 4.25 & 3.15 & 4.50 & 4.50 & 4.25 \\ 
			LM & 2.90 & 4.30 & 4.15 & 4.25 & 3.20 & 4.05 & 3.70 & 3.70 & 3.00 & 4.10 & 3.25 & 3.85 \\ 
			CLR & 2.80 & 3.95 & 4.05 & 4.05 & 3.20 & 4.10 & 3.35 & 3.45 & 2.55 & 4.00 & 3.30 & 3.95 \\ 
			 \cmidrule(lr){1-1}\cmidrule(lr){2-3}\cmidrule(lr){4-5}\cmidrule(lr){6-7}
				\cmidrule(lr){8-9}\cmidrule(lr){10-11}\cmidrule(lr){12-13}
				PAR$_1$ & 5.00 & 5.55 & 5.45 & 4.75 & 5.00 & 5.55 & 5.45 & 4.75 & 5.00 & 5.55 & 5.45 & 4.75 \\ 
				PAR$_2$ & 5.05 & 6.85 & 5.60 & 5.45 & 5.05 & 6.85 & 5.60 & 5.45 & 5.05 & 6.85 & 5.60 & 5.45 \\ 
				PLM & 5.10 & 5.60 & 5.40 & 5.15 & 5.00 & 6.05 & 5.05 & 5.05 & 4.70 & 6.15 & 4.95 & 5.45 \\ 
				PCLR & 4.70 & 6.10 & 5.15 & 4.90 & 4.40 & 5.40 & 4.00 & 4.30 & 3.65 & 5.10 & 3.70 & 4.20 \\ 
				 \cmidrule(lr){1-1}\cmidrule(lr){2-3}\cmidrule(lr){4-5}\cmidrule(lr){6-7}
				\cmidrule(lr){8-9}\cmidrule(lr){10-11}\cmidrule(lr){12-13}
			RARn & 18.65 & 15.70 & 5.25 & 4.45 & 18.65 & 15.70 & 5.25 & 4.45 & 18.65 & 15.70 & 5.25 & 4.45 \\ 
			RARw & 12.40 & 10.20 & 5.35 & 4.10 & 12.40 & 10.20 & 5.35 & 4.10 & 12.40 & 10.20 & 5.35 & 4.10 \\ 
			RCLRn & 13.85 & 14.55 & 3.35 & 3.70 & 13.15 & 14.50 & 3.15 & 3.75 & 11.10 & 12.65 & 4.00 & 3.50 \\ 
			RCLRw & 10.40 & 12.90 & 4.45 & 4.55 & 10.20 & 11.90 & 4.20 & 4.65 & 9.15 & 10.90 & 4.55 & 4.35 \\ 
			 \cmidrule(lr){1-1}\cmidrule(lr){2-3}\cmidrule(lr){4-5}\cmidrule(lr){6-7}
				\cmidrule(lr){8-9}\cmidrule(lr){10-11}\cmidrule(lr){12-13} 
				WAR & 5.30 & 7.45 & 5.35 & 4.10 & 5.30 & 7.45 & 5.35 & 4.10 & 5.30 & 7.45 & 5.35 & 4.10 \\ 
				WLM & 4.65 & 5.85 & 4.95 & 4.80 & 4.20 & 6.40 & 4.40 & 4.05 & 4.50 & 5.95 & 4.30 & 4.35 \\ 
				 \hline \hline
			\end{tabular}
		\end{center}
		\footnotesize{Note: $t_{5}[0, I_{k+p+1}]$ and $N[0, I_{k+p+1}]$ correspond to $(W_i^{\prime}, X_{2i}^{\prime}, u_i, \epsilon_i)^{\prime}\sim t_{5}[0, I_{k+p+1}]$ (dependent but uncorrelated) and $(W_i^{\prime}, X_{2i}^{\prime}, u_i, \epsilon_i)^{\prime}\sim N[0, I_{k+p+1}]$ (independent), respectively. 
			PLM and PCLR denote the robust permutation LM and CLR tests, WAR and WLM are the wild bootstrap AR and LM tests of \cite{Davidson-MacKinnon(2012)} and the remaining tests are as in Table \ref{HT}. The number of replications is 2000, and the number of permutation samples for each replication is $999\, (N=1000)$.}		
	\end{table}
	\begin{table}[htbp]
		\small
		\caption{{
				Null rejection probabilities at $5\%$ level. Conditionally heteroskedastic design with weak identification ($\lambda=4$).
		}}%
		\begin{center}
			\label{hett2}
			\setlength\tabcolsep{3pt} 
			\begin{tabular}{lcccccccccccc}
				\hline\hline
				$n=100$ & \multicolumn{6}{c}{$p=1$} & \multicolumn{6}{c}{$p=5$}\\
				\cmidrule(lr){1-1}\cmidrule(lr){2-7}\cmidrule(lr){8-13}
				&\multicolumn{3}{c}{{\footnotesize{$t_{5}[0, I_{k+p+1}]$}}} & \multicolumn{3}{c}{{\footnotesize{$N[0, I_{k+p+1}]$}}} &\multicolumn{3}{c}{{\footnotesize{$t_{5}[0, I_{k+p+1}]$}}} & \multicolumn{3}{c}{{\footnotesize{$N[0, I_{k+p+1}]$}}} \\ 
				\cmidrule(lr){2-4}\cmidrule(lr){5-7}\cmidrule(lr){8-10}\cmidrule(lr){11-13}
				Tests &    $k=2$   & $k=5$      & $k=10$      &$k=2$      & $k=5$      &$k= 10$ &$k=2$   & $k=5$      & $k=10$      &$k=2$      & $k=5$      &$k= 10$      \\ 
				\cmidrule(lr){1-1}\cmidrule(lr){2-4}\cmidrule(lr){5-7}\cmidrule(lr){8-10}\cmidrule(lr){11-13}
				AR & 2.35 & 1.75 & 0.60 & 4.05 & 2.90 & 1.40 & 4.30 & 2.40 & 1.15 & 5.25 & 3.25 & 2.60 \\ 
				LM & 3.10 & 3.95 & 3.70 & 4.55 & 3.85 & 3.80 & 4.50 & 2.80 & 2.50 & 6.00 & 3.60 & 4.00 \\ 
				CLR & 2.40 & 2.25 & 1.50 & 4.10 & 3.15 & 1.85 & 4.15 & 2.45 & 1.30 & 5.65 & 3.00 & 3.45 \\ 
				
				\cmidrule(lr){1-1}\cmidrule(lr){2-4}\cmidrule(lr){5-7}\cmidrule(lr){8-10}\cmidrule(lr){11-13}
				PAR$_1$ & 3.60 & 5.15 & 4.35 & 4.85 & 5.60 & 5.20 & 7.40 & 7.55 & 8.85 & 5.95 & 4.70 & 4.70 \\ 
				PAR$_2$ & 3.35 & 4.85 & 4.35 & 4.90 & 5.30 & 5.00 & 6.35 & 5.50 & 4.95 & 6.40 & 5.30 & 5.35 \\ 
				PLM & 4.20 & 5.35 & 5.80 & 5.15 & 4.90 & 5.05 & 6.20 & 3.90 & 4.60 & 6.30 & 4.45 & 6.10 \\ 
				PCLR & 3.80 & 5.30 & 4.20 & 4.95 & 4.60 & 3.95 & 5.85 & 4.50 & 3.50 & 6.65 & 4.45 & 5.55 \\ 
				
				\cmidrule(lr){1-1}\cmidrule(lr){2-4}\cmidrule(lr){5-7}\cmidrule(lr){8-10}\cmidrule(lr){11-13}
				RARn & 22.30 & 27.30 & 32.55 & 13.45 & 10.20 & 8.05 & 23.55 & 24.80 & 29.85 & 13.75 & 10.10 & 8.70 \\ 
				RARw & 14.35 & 17.90 & 20.10 & 10.40 & 8.85 & 6.65 & 15.95 & 16.00 & 17.55 & 10.40 & 8.25 & 7.35 \\ 
				RCLRn & 19.45 & 22.40 & 29.95 & 11.05 & 7.50 & 6.05 & 29.25 & 31.25 & 41.50 & 11.85 & 8.50 & 7.15 \\ 
				RCLRw & 13.70 & 17.50 & 19.55 & 9.40 & 7.45 & 6.35 & 25.00 & 26.20 & 36.00 & 10.10 & 8.25 & 7.55 \\ 
				
				\cmidrule(lr){1-1}\cmidrule(lr){2-4}\cmidrule(lr){5-7}\cmidrule(lr){8-10}\cmidrule(lr){11-13}
			WAR & 4.50 & 5.80 & 5.55 & 4.75 & 5.20 & 4.90 & 8.65 & 10.80 & 12.60 & 5.75 & 4.40 & 4.95 \\ 
			WLM & 3.20 & 3.50 & 4.15 & 4.00 & 4.05 & 4.90 & 6.60 & 5.85 & 7.00 & 5.55 & 3.90 & 5.30 \\ 
				\hline \hline 
			\end{tabular}
		\end{center}
		\footnotesize{Note: $t_{5}[0, I_{k+p+1}]$ and $N[0, I_{k+p+1}]$ correspond to $(W_i^{\prime}, X_{2i}^{\prime}, \upsilon_i, \epsilon_i)^{\prime}\sim t_{5}[0, I_{k+p+1}]$ (dependent but uncorrelated) and $(W_i^{\prime}, X_{2i}^{\prime}, \upsilon_i, \epsilon_i)^{\prime}\sim N[0, I_{k+p+1}]$ (independent), respectively. The error terms are generated as $u_i=W_{i1}\upsilon_{i}, V_i=\rho u_i+\sqrt{1-\rho^2}\epsilon_{i}$, where $W_{i1}$ is the first element of $W_i$ and $\rho=0.5$. The tests are in Table \ref{TI2}. The number of replications is 2000, and the number of permutation samples for each replication is $N=999$.}
	\end{table}

	\section{Empirical application}\label{sec: EA}
	\cite{Bazzi-Clemens(2013)} re-examine cross-sectional IV regression results about the effect of foreign aid on economic growth from several published studies. The variables in the ``aid-growth" IV regressions estimated by \cite{Rajan-Subramanian(2008)} and \cite{Bazzi-Clemens(2013)} are 
	\begin{itemize}
		\item $y_i:$ the average annual growth of per capita GDP; 
		\item $Y_i:$ the foreign-aid receipts to GDP ratio (Aid/GDP); 
		\item $W_{i1}:$ a variable constructed from aid-recipient population size, aid-donor population size, colonial relationship, and language traits (see Appendix A of \cite{Bazzi-Clemens(2013)});
		\item $X_i:$ constant, initial per capita GDP, initial level of policy, initial level of life expectancy, geography, institutional quality, initial inflation, initial M2/GDP, initial budget balance/GDP, Revolutions, Ethnic fractionalization, Sub-Saharan Africa dummy and East Asia dummy.
	\end{itemize}
	The data cover the period 1970-2000 and the sample size is $n=78$.  
	\citet[Table 1, Columns 1-4]{Bazzi-Clemens(2013)} consider the following four specifications: 
	\begin{itemize}
		\item Specification 1: The baseline specification of \citet[Table 4, Column 2]{Rajan-Subramanian(2008)} with the variables as above; 
		\item Specification 2: log population is included in the second stage of Specification 1;
		\item Specification 3: $W_{i2}=$ log population replaces the instrument $W_{i1}$ in Specification 1;
		\item Specification 4: An instrument, $W_{i3}$, constructed from the colonial ties indicators only replaces $W_{i1}$ in Specification 1. 
	\end{itemize}
	Specifications 1-4 are just-identified as there is a single instrument for the single endogenous regressor, hence 
	we only consider the AR and PAR test results. In addition, we consider Specifications 5-7 with IVs $(W_{i1}, W_{i2})'$, $(W_{i2}, W_{i3})'$ and $(W_{i1}, W_{i2}, W_{i3})'$, respectively, to examine whether the over-identified specifications could have any instrumentation power. $W_{i2}$ is included in the latter three cases because it is the strongest instrument as reported in \cite{Bazzi-Clemens(2013)}. We also include the confidence interval based on the two-stage least squares $t$-test, denoted as $t_{\mathrm{2sls}}$.\par 
	Tables \ref{BC13} and \ref{BC13b} report the $95\%$ confidence intervals for the coefficient on Aid/GDP. In all specifications, the Breusch-Pagan test $p$-values from the two separate reduced-form regressions of the endogenous variables on the exogenous variables show an evidence of heteroskedasticity. 
	
	The results for Specifications 1-4 in Table \ref{BC13} qualitatively agree with the homoskedastic CLR confidence intervals reported in \cite{Bazzi-Clemens(2013)}. The PAR$_2$ confidence intervals are nearly identical but slightly shorter than the asymptotic and wild bootstrap confidence intervals in Specifications 1 and 3. The very wide  confidence intervals in Specifications 2 and 4 that do not use log population or the instruments based on it indicate that the population size instrument used in the other specifications has indeed the most identifying power.\par 
	The above results also extend to Table \ref{BC13b}. Some noteworthy findings are as follows. The WAR confidence intervals are wider than the AR and PAR confidence intervals. The PLM confidence intervals are shorter than 
	the LM and WLM confidence intervals, but still wider than the other confidence intervals. Somewhat surprisingly, in Specification 6, the AR, PAR and $\mathrm{PCLR}$ confidence intervals exclude $0$ thereby pointing towards borderline significant effect of the foreign aid on growth despite the rather small value of the first-stage $F$-statistic 17.90. Also, in Specification 7, the $\mathrm{PAR}$ confidence intervals produce significant results which underscore the utility of the proposed tests.\par 
	However, these results should be interpreted with caution because the population size might affect the growth through multiple channels as argued by \cite{Bazzi-Clemens(2013)}.   
	Overall, the uncertainty regarding the effect of foreign aid in the sample data 
	may be expressed more accurately by the permutation and the wild bootstrap confidence intervals 
	as the unobserved confounders in the ``aid-growth" regression are more likely to be orthogonal to, rather than independent of the population size. 
	\begin{table}[htbp]
		\small
		\caption{{Confidence intervals for the coefficient on Aid/GDP in \cite{Rajan-Subramanian(2008)} and \cite{Bazzi-Clemens(2013)} data.}}%
		\begin{center}
			\label{BC13}
			\setlength\tabcolsep{2pt} 
			\begin{tabular}{lcccccc}
				\hline \hline
				&\multicolumn{2}{c}{Spec. 1}& \multicolumn{1}{c}{Spec. 2}& \multicolumn{2}{c}{Spec. 3}&\multicolumn{1}{c}{Spec. 4}\\
				Test statistics & $95\%$ CI & Length &$95\%$ CI &$95\%$ CI& Length &$95\%$ CI\\ 
				\cmidrule(lr){1-1}\cmidrule(lr){2-3}\cmidrule(lr){4-4}\cmidrule(lr){5-6}\cmidrule(lr){7-7}
			
				$t_{\mathrm{2sls}}$ & $[-0.05, 0.24]$ & $0.29$ & $[-5.26, 7.08]$  & $[-0.06, 0.21]$ & $0.27$ & $[-991.83, 959.94]$\\ 
				 Hom. AR & $[-0.02, 0.28]$ & $0.30$ & $\supseteq[-40, 40]$ & $[-0.03, 0.25]$ & $0.28$ & $\supseteq[-40, 40]$\\ 
				 
				\cmidrule(lr){1-1}\cmidrule(lr){2-3}\cmidrule(lr){4-4}\cmidrule(lr){5-6}\cmidrule(lr){7-7}
				
				RARn & $[-0.05, 0.30]$ & $0.35$ & $\supseteq[-40, 40]$  & $[-0.06, 0.27]$ & $0.33$ & $\supseteq[-40, 40]$ \\ 
				RARw & $[-0.06, 0.17]$ & $0.23$ & $\supseteq[-40, 40]$ & $[-0.07, 0.17]$ & $0.24$ & $\supseteq[-40, 40]$ \\ 
				RCLRn & $[-0.03, 0.25]$ & $0.28$ & $\supseteq[-40, 40]$ & $[-0.04, 0.23]$ & $0.27$ & $\supseteq[-40, 40]$ \\ 
				RCLRw & $[-0.04, 0.18]$ & $0.22$ & $\supseteq[-40, 40]$ & $[-0.05, 0.15]$ & $0.20$ & $\supseteq[-40, 40]$\\ 
				
				\cmidrule(lr){1-1}\cmidrule(lr){2-3}\cmidrule(lr){4-4}\cmidrule(lr){5-6}\cmidrule(lr){7-7}
		
				  AR & $[-0.01, 0.29]$ & 0.30 & $\supseteq[-40, 40]$ & $[-0.02, 0.25]$ & $0.27$ & $\supseteq[-40, 40]$\\ 
				
				PAR$_1$ & $[-0.02, 0.30]$ & $0.32$ & $\supseteq[-40, 40]$ & $[-0.03, 0.27]$ & $0.30$ & $\supseteq[-40, 40]$\\ 
				PAR$_2$ & $[-0.01, 0.28]$ & $0.29$ & $\supseteq[-40, 40]$ & $[-0.02, 0.24]$ & $0.26$ & $\supseteq[-40, 40]$\\

				\cmidrule(lr){1-1}\cmidrule(lr){2-3}\cmidrule(lr){4-4}\cmidrule(lr){5-6}\cmidrule(lr){7-7}
			
				WAR & $[-0.02, 0.31]$ & $0.33$ & $\supseteq[-40, 40]$ & $[-0.03, 0.27]$ & $0.30$ & $\supseteq[-40, 40]$\\ 
				\cmidrule(lr){1-1}\cmidrule(lr){2-3}\cmidrule(lr){4-4}\cmidrule(lr){5-6}\cmidrule(lr){7-7}
				$F$-statistic &\multicolumn{2}{c}{31.63}& $0.13$ &\multicolumn{2}{c}{36.30}&$0.00$\\
				Breusch-Pagan $p$-values &\multicolumn{2}{c}{\{0.03, 0.00\}}&\{0.03, 0.00\} &\multicolumn{2}{c}{\{0.02, 0.00\}}&\{0.01, 0.00\}\\ 
				\hline \hline 
			\end{tabular}
		\end{center}
		\footnotesize{Note: The number of permutation samples is $1999\, (N=2000)$. The sample size is $78$. Specifications 1-4 are just-identified. $t_{\mathrm{2sls}}$ and Hom. AR are the homoskedastic two-stage least squares $t$ and AR tests, and the remaining tests are as in Tables \ref{HT} and \ref{TI2}. The Breusch-Pagan test $p$-values are computed using the fitted values in the reduced-form regressions of $y_i$ and $Y_i$ on the instrument and exogenous covariates respectively.}
	\end{table}
	\begin{table}[htbp]
		\footnotesize
		\caption{{Confidence intervals for the coefficient on Aid/GDP in \cite{Rajan-Subramanian(2008)} and \cite{Bazzi-Clemens(2013)} data.}}%
		\begin{center}
			\label{BC13b}
			\setlength\tabcolsep{1.5pt} 
			\begin{tabular}{lcccccc}
				\hline\hline 
				&\multicolumn{2}{c}{Spec. 5}& \multicolumn{2}{c}{Spec. 6}& \multicolumn{2}{c}{Spec. 7}\\
				Test statistics & $95\%$ CI & Length &$95\%$ CI& Length &$95\%$ CI& Length\\ 
				\cmidrule(lr){1-1}\cmidrule(lr){2-3}\cmidrule(lr){4-5}\cmidrule(lr){6-7}
			
				$t_{\mathrm{2sls}}$ & $[-0.05, 0.22]$ & $0.28$ & $[-0.05, 0.24]$ & $0.29$ & $[-0.05, 0.24]$ & $0.29$ \\ 
				Hom. AR & $[-0.05, 0.31]$ & $0.36$ & $[-0.02, 0.26]$ & $0.28$ & $[-0.04, 0.32]$ & $0.36$ \\ 
				Hom. LM & $[-0.63, 0.26]$ & $0.89$ & $[-0.63, 0.28]$ & $0.91$ & $[-0.63, 0.28]$ & $0.91$ \\ 
				Hom. CLR & $[ -0.03 , 0.26 ]$ & $0.30$ & $[ -0.03 , 0.28 ]$ & $0.31$ & $[ -0.03 , 0.28 ]$ & $0.31$ \\ 
				
				\cmidrule(lr){1-1}\cmidrule(lr){2-3}\cmidrule(lr){4-5}\cmidrule(lr){6-7}
	
				RARn & $[-0.08, 0.41]$ & $0.49$ & $[-0.06, 0.32]$ & $0.38$ & $[-0.09, 0.4]$ & $0.49$ \\ 
				RARw & $[-0.08, 0.29]$ & $0.37$ & $[-0.06, 0.23]$ & $0.29$ & $[-0.09, 0.27]$ & $0.36$ \\ 
				RCLRn & $[-0.03, 0.24]$ & $0.27$ & $[-0.03, 0.25]$ & $0.28$ & $[-0.03, 0.25]$ & $0.28$ \\ 
				RCLRw & $[-0.05, 0.17]$ & $0.22$ & $[-0.03, 0.17]$ & $0.20$ & $[-0.04, 0.17]$ & $0.21$ \\

				 \cmidrule(lr){1-1}\cmidrule(lr){2-3}\cmidrule(lr){4-5}\cmidrule(lr){6-7}
				
				AR & $[-0.04, 0.34]$ & $0.38$ & $[0.04, 0.32]$ & $0.28$ & $[0.01, 0.42]$ & $0.41$ \\ 
				LM & $[-0.43, 0.27]$ & $0.70$ & $[-0.61, 0.36]$ & $0.97$ & $[-0.69, 0.35]$ & $1.04$ \\ 
				CLR & $[-0.02, 0.28]$ & $0.30$ & $[0.00, 0.36]$ & $0.36$ & $[-0.03, 0.34]$ & $0.37$ \\ 
				
				\cmidrule(lr){1-1}\cmidrule(lr){2-3}\cmidrule(lr){4-5}\cmidrule(lr){6-7}
			
				PAR$_1$ & $[-0.05, 0.34]$ & $0.39$ & $[0.04, 0.34]$ & $0.30$ & $[0.01, 0.43]$ & $0.42$ \\ 
				PAR$_2$ & $[-0.04, 0.34]$ & $0.38$ & $[0.04, 0.31]$ & $0.27$ & $[0.02, 0.38]$ & $0.36$ \\ 
				PLM & $[-0.43, 0.25]$ & $0.68$ & $[-0.6, 0.33]$ & $0.93$ & $[-0.65, 0.33]$ & $0.98$ \\ 
				PCLR & $[-0.03, 0.27]$ & $0.30$ & $[0.01, 0.34]$ & $0.33$ & $[-0.02, 0.35]$ & $0.37$ \\ 
				
				\cmidrule(lr){1-1}\cmidrule(lr){2-3}\cmidrule(lr){4-5}\cmidrule(lr){6-7}
		
				WAR & $[-0.05, 0.44]$ & $0.49$ & $[0, 0.33]$ & $0.33$ & $[-0.04, 0.48]$ & $0.52$ \\ 
				WLM & $[-0.64, 0.30]$ & $0.94$ & $[-0.66, 0.30]$ & $0.96$ & $[-0.66, 0.33]$ & $0.99$ \\ 
				
				\cmidrule(lr){1-1}\cmidrule(lr){2-3}\cmidrule(lr){4-5}\cmidrule(lr){6-7}
				$F$-statistic &\multicolumn{2}{c}{17.97}&\multicolumn{2}{c}{17.90}&\multicolumn{2}{c}{11.86}\\
				Sargan $p$-value &\multicolumn{2}{c}{0.42}&\multicolumn{2}{c}{0.08}& \multicolumn{2}{c}{0.21}\\	
				Breusch-Pagan $p$-values &\multicolumn{2}{c}{\{0.03, 0.00\}}&\multicolumn{2}{c}{\{0.02, 0.00\}}&  \multicolumn{2}{c}{\{0.02, 0.00\}}\\	
				\hline \hline 
			\end{tabular}
		\end{center}
		\footnotesize{Note: The number of permutation samples is $1999\, (N=2000)$. The sample size is $78$. Specifications 5-7 are over-identified. Hom. LM and Hom. CLR are the homoskedastic LM and CLR tests, Sargan $p$-value is the $p$-value associated with the Sargan test of overidentifying restrictions, and the remaining tests are as in Tables \ref{BC13}.}
	\end{table}
	\section{Conclusion}\label{Conc}
	The PAR$_1$ test shares the strengths of the rank/permutation tests and the wild bootstrap tests because it is exact under independence and heteroskedasticity-robust under standard assumptions.  
	This test is recommended when the IVs are randomly assigned, hence independent of the exogenous covariates and the error term. The PAR$_2$ test also shows a satisfactory performance in the simulations. Because the 
	PAR tests do not require the correct specification of the first-stage equation as they directly uses the IVs in the equation $\hat{m}(\theta_0)=n^{-1}Z'\tilde{u}(\theta_0)$ \citep{Dufour-Taamouti(2007)}, they have robustness advantages relative to the other permutation tests.
\par  
Moreover, the PCLR test is found to perform reasonably well in the simulations. Assuming that \eqref{SE}-\eqref{RE} are correctly specified, as we have shown, the PCLR tests are similar and equivalent to 
	the CLR tests of \cite{Andrews-Guggenberger(2019)} asymptotically. In homoskedastic IV models, the latter tests are asymptotically equivalent or reduce to the CLR test of \cite{Moreira(2003)} which is known to be optimal when there is single endogenous regressor \citep{Andrews-Moreira-Stock(2006)}. However, despite being equivalent to the PCLR test under strong identification, the PLM test shows an inferior performance in terms of power. Based on the properties mentioned and the simulation evidence, we recommend reporting both the PAR and PCLR test results in practice.
\par 
	Looking ahead, we plan to address the important issue of cluster and identification-robust inference in IV models using the current approach in another work.

	\small
\bibliographystyle{ims}
\bibliography{PermutationIV}

\newpage
\includepdf[pages=-]{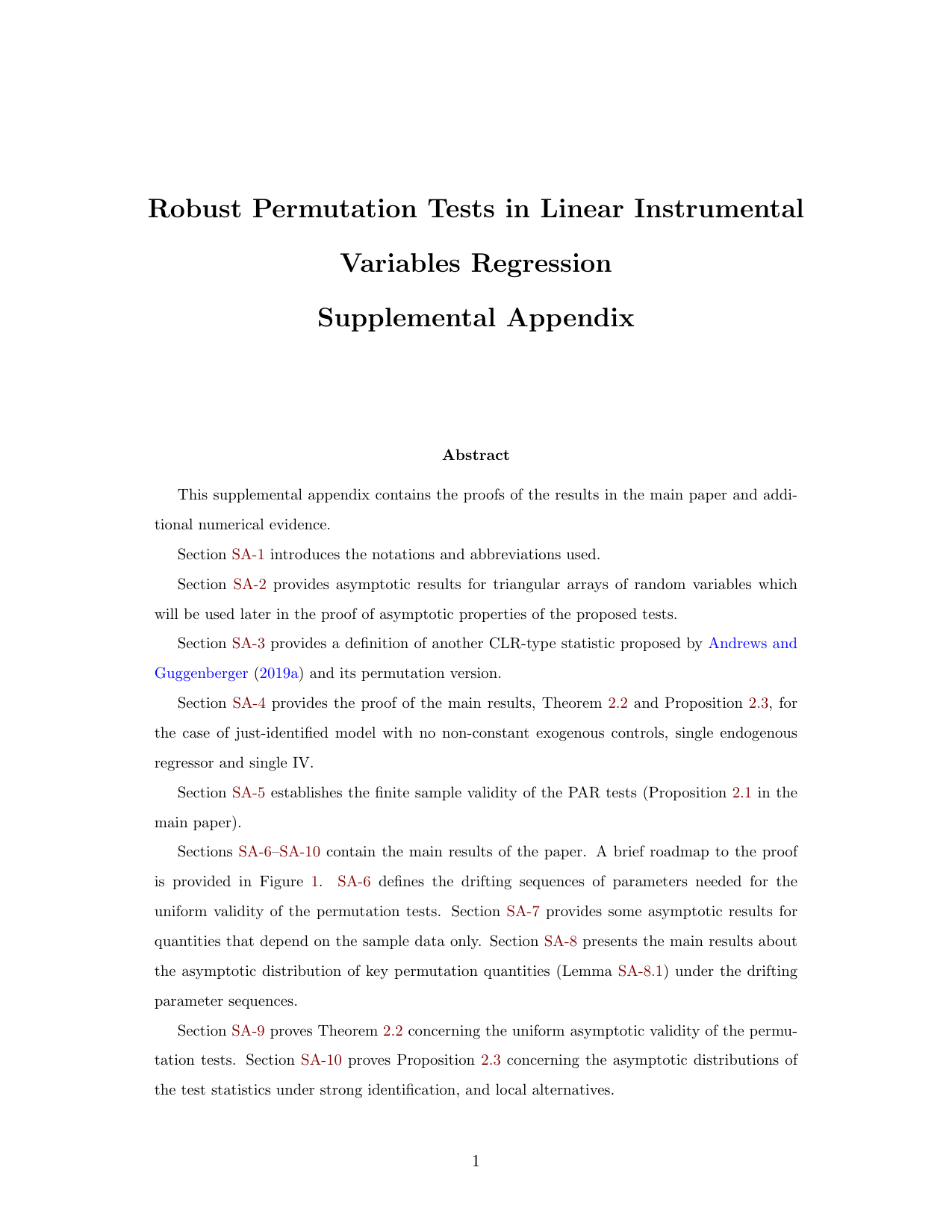}
\end{document}